\newtheorem{theorem}{Theorem}[section]
\newtheorem{lemma}[theorem]{Lemma}
\newtheorem{proposition}[theorem]{Proposition}
\newtheorem{definition}[theorem]{Definition}
\newtheorem{prop}[theorem]{Proposition}
\newtheorem{assumption}[theorem]{Assumption}
\newtheorem*{conjecture*}{Conjecture}
\newcommand{\cross}{\times}
\newcommand{\set}[1]{\left\{ #1 \right\}}
\newcommand{\union}{\cup}
\newcommand{\intersect}{\cap}
\newcommand{\sm}{\setminus}
\renewcommand{\hat}{\widehat}
\renewcommand{\tilde}{\widetilde}
\renewcommand{\bar}{\overline}
\DeclareMathOperator{\poly}{poly}
\DeclareMathOperator{\polylog}{polylog}
\def\min{\qopname\relax n{min}}
\def\max{\qopname\relax n{max}}
\def\Pr{\qopname\relax n{\mathbf{Pr}}}
\def\Ex{\qopname\relax n{\mathbf{E}}}
\newcommand{\RR}{\mathbb{R}}
\newcommand{\NN}{\mathbb{N}}
\def\A{\mathcal{A}}
\def\G{\mathcal{G}}
\def\P{\mathcal{P}}
\def\T{\mathcal{T}}
\def\eps{\epsilon}
\def\sse{\subseteq}
\newcommand{\eat}[1]{}
\newcommand{\INPUT}{\item[\textbf{Input:}]}
\newcommand{\OUTPUT}{\item[\textbf{Output:}]}
\newcommand{\maxi}[1]{\mbox{maximize} & {#1 } & \\}
\newcommand{\st}{\mbox{subject to} }
\newcommand{\con}[1]{&#1 & \\}
\newcommand{\qcon}[2]{&#1, & \mbox{for } #2.  \\}
\newenvironment{lp}{\begin{equation}  \begin{array}{lll}}{\end{array}\end{equation}}
\newenvironment{lp*}{\begin{equation*}  \begin{array}{lll}}{\end{array}\end{equation*}}
\DeclareMathOperator{\den}{density}
\DeclareMathOperator{\bden}{bi-density}
\newcommand{\sig}{\varphi} 
\newcommand{\di}{\lambda} 
\newcommand{\dii}{x} 
\title{On the Hardness of Signaling}
\author{
Shaddin Dughmi\thanks{Supported in part by NSF CAREER Award CCF-1350900.}\\
Department of Computer Science\\
University of Southern California\\
{\tt shaddin@usc.edu}
}
\begin{document}

\maketitle

\begin{abstract}

There has been a recent surge of interest in the role of information in strategic interactions. Much of this work seeks to understand how the realized equilibrium of a game is influenced by uncertainty in the environment and the information available to players in the game. Lurking beneath this literature is a fundamental, yet largely unexplored, algorithmic question: how should a ``market maker'' who is privy to additional information, and equipped with a specified objective, inform the players in the game? This is an informational analogue of the mechanism design question, and views the \emph{information structure} of a game as a mathematical object to be designed, rather than an exogenous variable.

We initiate a complexity-theoretic examination of the design of optimal information structures in general Bayesian games, a task often referred to as \emph{signaling}. We focus on one of the simplest instantiations of the signaling question: Bayesian zero-sum games, and a principal who must choose an information structure maximizing the equilibrium payoff of one of the players. In this setting, we show that optimal signaling is computationally intractable, and in some cases hard to approximate, assuming that it is hard to recover a planted clique from an Erd\H{o}s-R\'{e}nyi random graph. This is despite the fact that equilibria in these games are computable in polynomial time, and therefore suggests that the hardness of optimal signaling is a distinct phenomenon from the hardness of equilibrium computation.

Necessitated by the non-local nature of information structures, en-route to our results we prove an ``amplification lemma'' for the planted clique problem which may be of independent interest. Specifically, we show that even if we plant many cliques in an Erd\H{o}s-R\'{e}nyi random graph, so much so that most nodes in the graph are in some planted clique, recovering a constant fraction of the planted cliques is no easier than the traditional planted clique problem. 
\end{abstract}





\section{Introduction}

Mechanism design is concerned with designing the rules of a game so as to effect desirable outcomes at equilibrium. This form of intervention, through the design of \emph{incentives}, is fundamentally algorithmic in nature, and has led to a large body of work which examines the computational complexity of incentive-compatible mechanisms. This paper concerns a different, though arguably equally important, mode of intervention: through making available the right \emph{information}.

A classic example illustrating the importance of information in games
is Akerlof's ``market for lemons'' \cite{akerloflemons}.
Each seller in this market is looking to sell a used car which is
equally likely to be a ``peach'' (high quality) or a ``lemon'' (low quality). 
Prospective buyers values peaches at \$1000, and lemons at \$0,
whereas sellers value keeping a peach at \$800 and a lemon at \$0. 
If both buyers and sellers are informed, then peaches are traded at
some price between \$800 and \$1000, 
increasing the social welfare by \$200 per trade. 
In contrast, when only sellers are informed, each buyer would only be
willing to pay his expected value of \$500 for a random car. 
Since this is less than a seller's reservation value for a peach, the
sellers of peaches are driven out of the market, and only lemons are ever traded.

\begin{figure}
\begin{center}
\begin{sfgame}{.6in}{2}{0.9in}
    \players{}{}   
    \col{Coop\ }   \col{Defect}
    \row{Coop\ }           \pay{$-1+\theta$}{$-1 + \theta\ $}   \pay{$-5 +\theta$}{$0$}
    \row{Defect}           \pay{$0$}{$-5+\theta\ $}   \pay{$-4$}{$-4$}
\end{sfgame}
\end{center}
\caption{An incomplete information variant of the prisoners' dilemma.}
\label{fig:prisoners}
\end{figure}

The previous example illustrates that information deficiency can
reduce the payoff of some or all players in a game. 
 Somewhat counter-intuitively, revealing additional information can also degrade the payoffs of some or all players, and in fact optimal information structures may reveal some but not all the information available. For an illustrative example, consider the incomplete-information variant of the classical prisoners' dilemma shown in Figure \ref{fig:prisoners}, in which the game's payoffs are parametrized by a state of nature $\theta$. When $\theta=0$, this is the traditional prisoners' dilemma  in which cooperation is socially optimal, yet the unique Nash equilibrium
is the one where both players defect, making both worse off. When, however, the reward from cooperation is uncertain, the equilibrium depends on players' beliefs about  $\theta$. 

Assuming that $\theta$ is uniformly distributed in $[-3,3]$, and that players are risk-neutral and know nothing about $\theta$ besides its distribution, they play as if $\theta$ equals its expectation  and the defection equilibrium persists. On the other hand, if both players learn the realization of $\theta$ before making their decisions, they would cooperate at equilibrium when $\theta \geq 1$, and defect otherwise, improving both their expected utilities. What is surprising, however, is that neither the opaque nor the transparent signaling scheme is optimal. Consider the following partially-informative scheme: when $\theta > -1$, both players receive the signal \texttt{High}, and otherwise receive the signal \texttt{Low}.  On signal \texttt{High}, the posterior expected value of $\theta$ for both players is $1$, inducing cooperation. On signal \texttt{Low}, the players defect. This information structure induces cooperation with greater probability than either of the opaque or transparent schemes, and thus improves the expected utility of both players.
%
%
%

Motivated by these intricacies, this paper begins a systematic complexity-theoretic examination of optimal information structure design in games, henceforth referred to as \emph{signaling}. 
Our choice of backdrop is the simplest and  most fundamental of all in game theory: 2-player zero-sum games. We consider incomplete-information zero-sum games, in which a state of nature $\theta$ determines the entries of the payoff matrix, and is drawn from a common-knowledge prior distribution $\di$ represented explicitly. We then ask the following question: how should a principal privy to the realization of $\theta$ release information to players in order to effect a desired outcome at minimax equilibrium? Such a principal may be interested in maximizing some weighted combination of the players' utilities, or more generally some (possibly linear) function of their chosen mixed strategies.  Most such natural classes of objective functions include, as a special case, the task of maximizing one player's utility at equilibrium. Therefore, we adopt that as our objective of choice for our hardness results.

Before describing our results and techniques, we elaborate on our modeling assumptions and chosen class of games. First we note that, like much of the prior work, we constrain ourselves to the design of \emph{symmetric} signaling schemes: those which reveal the same information regarding $\theta$ to every player in the game.  We believe such a restriction to a shared communication channel is natural and justified in many foreseeable applications. Moreover, the equilibria of a Bayesian game when agents have symmetric beliefs are simpler to characterize, better understood, and more often admits a canonical choice for equilibrium selection.\footnote{e.g., as apparent in \cite{peaches}, auctions with information asymmetries suffer from the multiplicity of equilibria, and their analysis requires nontrivial refinements of standard equilibrium concepts.} That being said,  the complexity of \emph{asymmetric} signaling schemes is an exciting avenue for future work.\footnote{Asymmetric information introduces interdependencies and correlations in agents' posterior estimates of the game's payoffs. Asymmetric signaling appears related to optimization over the space of correlated equilibria. We refer the reader to the characterization of  \citet{bergemannbce} for more detail.}

Second, we note that in addition to being simple and fundamental, zero-sum games admit  a canonical and \emph{tractable}  choice of equilibrium --- the minimax equilibrium. In addition to simplifying our analysis, this drives a larger conceptual point:  the hardness of optimal signaling is a distinct phenomenon from the hardness of equilibrium computation. Indeed, it is not hard to see that the optimal signaling task must, in general, ``inherit'' the computational complexity of the adopted equilibrium concept. Our choice of a class of games for which equilibrium can be easily computed essentially ``disentangles'' the complexity of optimal signaling from that of equilibrium computation.

\subsection*{Results and Techniques}
Our results are based on the conjectured hardness of the  planted clique problem. Specifically, we assume that it is hard to recover a planted $k$-clique from an Erd\H{o}s-R\'{e}nyi random graph random graph $\G(n,p)$ for $p=\frac{1}{2}$ and some $k=k(n)$ satisfying $k=\omega(\log^2 n)$ and $k=o(\sqrt{n})$.

We prove two main results assuming this conjecture in Section \ref{sec:zero}. For explicitly-represented 2-player Bayesian zero-sum games, we prove that no algorithm computes a player-optimal signaling scheme in polynomial time. We strengthen this to a hardness of approximation result, for an additive absolute constant, for an implicitly-represented zero-sum game which nevertheless permits efficient equilibrium computation. \

Our hardness results hold for a Bayesian zero-sum game in which the states of nature, as well as each player's pure strategies, are the nodes $V$ of a graph $G$. Specifically, we construct such a game in which signaling schemes correspond to ``fractional partitions'' of the states of nature $V$. Roughly speaking, the first player's utility function favors signaling schemes which (fractionally) partition the graph into large yet highly connected clusters of nodes. If $G$ is a random graph in which many $k$-cliques have been planted randomly throughout, so much so that the planted cliques roughly \emph{cover} the nodes of $G$, then the best signaling scheme roughly corresponds to a partition of $G$ into the planted cliques. Moreover, we exhibit an algorithm which, given a near-optimal signaling scheme, recovers a constant fraction of the planted clique cover.

In order to base our results on the planted clique conjecture, in Section \ref{sec:planted} we prove an ``amplification lemma'' extending the hardness of recovering a planted clique to recovering a constant fraction of a planted clique cover. This lemma, which may be of independent interest,  appears surprising because its analogue for the ``distinguishing version'' of the planted cover and planted clique problems is false.


\subsection*{Additional Related Work}

The study of the effects of information on strategic interactions, and mechanisms for signaling, has its roots in the early works of \citet{akerloflemons} and \citet{spence1973job}. 
%
\citet{hirshleifer71} was the first to observe that more information sometimes leads to worse market outcomes, in contrast to earlier work by  \citet{blackwell51} which implied that more information is always better for a single agent in a non-competitive environment. Since then, many works have examined the effects of additional  information on players' equilibrium utilities.  \citet{lehrermediation} showed that additional information improves players' utilities in \emph{common interest} games --- games where players have identical payoffs in each outcome, and  \citet{bassanpositive} exhibited a polyhedral characterization of games in which more information improves the individual utility of \emph{every} player.
%
The work of \citet{pkeski} is related to ours, and considers \emph{asymmetric} information structures in zero-sum games. They show that maximizing one player's utility requires revealing as much information as possible to that player, and withholding as much as possible from his opponent.

Many recent works have established striking characterizations of the space of equilibria attainable by signaling, in natural game domains. \citet{bergemanndiscrimination} considers signaling in a fundamental buyer/seller price-discrimination game, and provides a polyhedral characterization of the space of potential equilibria and their associated payoffs. \citet{CommonValue} characterize the equilibrium of a $2$-player common value auction supplemented with a signaling scheme. 
\citet{bergemannbce} characterize the space of equilibria attainable by (asymmetric) information structures in general games, and relate it to the space of correlated equilibria.


Despite appreciation of the importance of information in strategic interactions, it is only recently that researchers have started viewing the information structure of a game as a mathematical object to be designed, rather than merely an exogenous variable. \citet{bayesianpersuasion} examine settings in which a sender must design a signaling scheme to convince a less informed receiver to take a desired action. Recent work in the CS community, including by \citet{emeksignaling}, \citet{miltersensignals} and \citet{guo}, examines revenue-optimal signaling in an auction setting, and presents polynomial-time algorithms and hardness results for computing it.  \citet{DIR14} examine welfare-optimal signaling in an auction setting under exogenous constraints, and presents polynomial-time algorithms and hardness results.

\section{Preliminaries}
\label{sec:prelims}
\subsection{Games}
\label{sec:games}

At its most general, a \emph{Bayesian game} is a family of \emph{games of complete formation} parametrized by a state of nature $\theta$, where $\theta$ is assumed to be drawn from a known prior distribution. In this paper, we focus on \emph{finite, Bayesian, 2-player zero-sum  games},  each of which is described by the following parameters:
\begin{itemize}
\item Nonnegative integers $r$ and $c$, denoting the number of pure strategies of the row player and column player respectively. 
\item A finite family $\Theta=\set{1,\ldots,M}$ of \emph{states of nature}, which we index by $\theta$.
\item A family of \emph{payoff matrices} $\A^\theta \in \RR^{r \times c}$, indexed by states of nature $\theta \in \Theta$.
\item A \emph{prior distribution} distribution  $\di \in \Delta_{M}$ on the states of nature.
\end{itemize}
Naturally, $\A^\theta(i,j)$ denotes the payoff of the row player when the row player plays $i$, the column player plays $j$, and the state of nature is $\theta$. The column player's payoff in the same situation is $-\A^\theta(i,j)$.

We use two different representations of zero-sum games. In the \emph{explicit representation}, the matrices $\set{A^\theta}_\theta$, as well as the prior distribution $\di$, are given explicitly. We also relax this somewhat for our second result; specifically, we consider a game in which the individual matrices $\A^\theta$ are given \emph{implicitly}, since one of the players has a number of strategies which is exponential in the natural description of the game. Nevertheless, a low-rank bilinear structure permits efficient computation of equilibria, and the value of the game, in the implicitly represented game we consider.\footnote{Implicitly-represented 2-player zero-sum games often admit efficient algorithms for equilibrium computation. Fairly general conditions under which this is possible are well exposited by \citet{dueling}.}



\subsection{Signaling Schemes} \label{prelim:schemes}
We examine policies whereby a principal reveals partial information regarding the state of nature $\theta$ to the players. Crucially, we require that the principal reveal the same information to both players in the game. A \emph{symmetric signaling scheme} is given by a set $\Sigma$ of \emph{signals}, and a (possibly randomized) map $\sig$ from states of nature $\Theta$ to signals $\Sigma$. Abusing notation, we use $\sig(\theta,\sigma)$ to denote the probability of announcing signal $\sigma \in \Sigma$ when the state of nature is $\theta \in \Theta$. We restrict attention to signaling schemes with a finite set of signals $\Sigma$, and this is without loss of generality when $\Theta$ is finite. We elaborate on this after describing the convex decomposition interpretation of a signaling scheme. 

We note that signaling schemes are in one-to-one correspondence with \emph{convex decompositions} of the prior distribution $\di \in \Delta_M$ --- namely, distributions supported on the simplex $\Delta_M$, and having expectation $\di$. 
Formally, a signaling scheme $\sig: \Theta \to \Sigma$ corresponds to the convex decomposition  \[\di = \sum_{\sigma \in \Sigma} \alpha_\sigma \cdot  \dii_\sigma,\] where $\alpha_\sigma= \Pr [\sig(\theta) = \sigma] = \sum_{\theta \in \Theta} \di(\theta) \sig(\theta,\sigma),$ and $\dii_\sigma(\theta) =  \Pr [ \theta | \sig(\theta) = \sigma] = \frac{\di(\theta) \sig(\theta,\sigma)}{\alpha_\sigma}.$
Note that  $\dii_\sigma \in \Delta_M$ is the \emph{posterior distribution} of $\theta$ conditioned on signal $\sigma$, and $\alpha_\sigma$ is the probability of signal $\sigma$. The proof of the converse direction, namely that every convex decomposition of $\di$ corresponds to a signaling scheme, is elementary yet thought provoking, and hence left to the reader.

We judge the quality of a signaling scheme by the outcome it induces signal by signal. Specifically, the principal is equipped with an objective function of the form $\sum_\sigma \alpha_\sigma \cdot f( \dii_\sigma)$, where $f: \Delta_M \to \RR$ is some function mapping a posterior distribution to the quality of the equilibrium chosen by the players. For example, $f$ may be the social welfare at the induced equilibrium, a weighted combination of players' utilities at equilibrium, or something else entirely. In this setup, one can show that there always exists an signaling scheme with a finite set of signals which maximizes our objective, so long as the states of nature are finitely many.  The optimal choice of signaling scheme is related to the \emph{concave envelope}   $f^+$ of the function $f$.\footnote{$f^+$ is the point-wise lowest concave function $h$ for which $h(x) \geq f(x)$ for all $x$ in the domain. Equivalently, the hypograph of $f^+$ is the convex hull of the hypograph of $f$.} 
Specifically, such a signaling scheme achieves $\sum_\sigma \alpha_\sigma \cdot f( \dii_\sigma) = f^+(\di)$. Application of Caratheodory's theorem to the hypograph of $f$, therefore, shows that $M+1$ signals suffice.

Our impossibility results rule out algorithms for computing near optimal signaling schemes, almost agnostic to how such schemes are represented as output. For concreteness, the reader can think of a signaling scheme $\sig$ as represented by the matrix of pairwise probabilities $\sig(\theta,\sigma)$. Since we only consider games where the states of nature, and therefore also the number of signals w.l.o.g., are polynomially many in the description size of the game, this is a compact representation. The representation of $\sig$ as a convex decomposition would do equally well, as both representations can be efficiently computed from each other.  


\subsection{Strategies, Equilibria, and Objectives} \label{prelim:equilibria}


Given a Bayesian game, a symmetric signaling scheme $(\alpha, \dii)$ with signals $\Sigma$ induces $|\Sigma|$ sub-games, one for each signal. The subgame corresponding to signal $\sigma \in \Sigma$ is played with probability $\alpha_\sigma$, and players' (common) beliefs regarding the state of nature in this subgame are given by the posterior distribution $\dii_\sigma \in \Delta_M$. The quality of a symmetric signaling scheme in such a game is contingent on a choice of an \emph{equilibrium concept} and an \emph{objective function}.

\subsubsection*{Equilibrium Concept}

 An equilibrium concept distinguishes a mixed strategy profile for every posterior belief $\dii \in \Delta_M$. This permits defining an objective function on signaling schemes, as described in Section \ref{prelim:schemes}. In general Bayesian games, evaluating the quality of a signaling scheme may be complicated by issues of equilibrium selection --- for example, general sum games often admit many Bayes-Nash equilibria. However, our restriction to two-player zero-sum games side-steps such complications: all standard equilibrium concepts are payoff-equivalent in zero-sum games, and correspond to the well-understood minimax equilibrium in each subgame. 

Our restriction to zero-sum games avoids an additional complication, which arises for more general games. Equilibrium computation in full-information zero-sum games is \emph{tractable}, permitting efficient computation of equilibrium in the Bayesian game, for every posterior distribution over the states of nature.  This avoids ``inheriting'' the computational complexity of equilibrium computation into our optimal signaling problems, and therefore suggests that the hardness of optimal signaling is a distinct phenomenon from the complexity of computing equilibria.


Formally,  
given a Bayesian zero-sum game $(\set{\A^\theta}_{\theta=1}^M, \di)$ as described in Section \ref{sec:games}, and a signaling scheme corresponding to the convex decomposition $(\alpha, \dii)$ of the prior $\di \in \Delta_M$, this naturally induces  a distribution over sub-games of the same form. Specifically, for each signal $\sigma \in \Sigma$, the Bayesian zero-sum game $(\set{\A^\theta}_{\theta=1}^M, \dii_\sigma)$ is played with probability $\alpha_\sigma$. We use $\A^\sigma = \Ex_{\theta \sim \dii_\sigma} [ A^\theta]$ to denote the matrix of posterior expected payoffs conditioned on signal $\sigma$.
%
%
A \emph{Bayesian Nash equilibrium} corresponds to an equilibrium of each sub-game $(\set{\A^\theta}_{\theta=1}^M, \dii_\sigma)$ in which players play as they would in the complete information zero-sum game $\A^\sigma$.  In the sub-game corresponding to signal $\sigma$, the row player's payoff is simply his payoff in the complete information zero-sum game $A^\sigma$, namely
\[ u_r(\A^\sigma) =  \max_{y \in \Delta_r} \min_{j=1}^c (y^\intercal \A^\sigma)_j .\]
The expected payoff of the row player over the entire game is then given by
\[ u_r(\alpha,\dii)= \sum_{\sigma \in \Sigma} \alpha_\sigma u_r(\A^\sigma)\]

\subsubsection*{Objective Function}

We adopt $u_r$, as given above, as our objective function. By symmetry, this is technically equivalent to adopting objective function $u_c$, the utility of the column player.  To justify this choice for our hardness results, observe that most natural classes of objective functions --- say weighted combination of players' utilities, or linear functions of players' mixed strategies --- include $u_r$ and $u_c$ as special cases. 

Additionally,  we are motivated by the fact that the space of all payoff profiles achievable by signaling is spanned by the  two schemes maximizing the utility of one of the players, in the following sense. Let $\sig_r$ be a signaling scheme maximizing the row player's utility, and $\sig_c$ be a signaling scheme maximizing the column player's utility. Moreover, let $u_r^{\max} = -u_c^{\min}$ denote the row player's maximum utility (equivalently, the negation of the column player's minimum utility), and similarly let $u_c^{\max} = -u_r^{\min}$ denote the column player's maximum utility (equivalently, the negation of the row player's minimum utility).  If there exists a signaling scheme $\sig$ inducing a utility profile $(u_r,u_c)$,  then it can be formed as a convex combination of the two extreme signaling schemes $\sig_r$ and $\sig_c$ --- specifically, by running $\sig_r$ with probability $\frac{u_r - u_r^{\min}}{u_r^{\max} - u_r^{\min}}$ and $\sig_c$ otherwise. Consequently,  ``mapping out'' the space of possible payoff profiles, as well constructing a signaling scheme attaining a desired payoff profile, both reduce to computing the extreme schemes $\sig_r$ and~$\sig_c$.

\subsection{Graphs, Clusters, and Density}
An \emph{undirected graph} $G$ is a pair $(V,E)$, where $V$ is a finite set of \emph{nodes} or \emph{vertices}, and $E \sse \binom{V}{2}$ is a set of undirected \emph{edges}. 
We usually use $n$ to denote the number of vertices of a graph, and $m$ to denote the number of edges.
Given a graph $G=(V,E)$, a \emph{cluster} is some $S \sse V$. Given a cluster $S$, we define the intra-cluster edges $E(S)$ as those edges with both endpoints in $S$. The \emph{induced subgraph} of $S$ is the graph $H=(S, E(S))$.  
Moreover, given two clusters $S,T \sse V$, we define the inter-cluster edges $E(S,T)$ as the edges with at least one endpoint in each of $S$ and $T$.



The \emph{density} of $G$ is the fraction of all potential edges in $E$ --- namely $\den(G) = \frac{2|E|}{|V| (|V|-1)}$. More generally, the density of a cluster $S$ in $G$ is the density of the subgraph of $G$ induced by $S$, specifically $\den_G(S) = \frac{2|E(S)|}{|S|(|S| -1)}$. 
   To precisely state and prove our results, we require a slightly different notion of density, defined between pairs of clusters.
We define the \emph{bi-density} between clusters $S$ and $T$ as the fraction of all pairs $(u,v) \in S \cross T$ connected by an edge; formally, $\bden(S,T) = \frac{1}{|S| |T|} \cdot |\set{ (u,v) \in S \cross T | \set{u,v} \in E}|$. Equivalently, the bi-density between $S$ and $T$ can be thought of in terms of the adjacency matrix of $A$ of $G$; specifically,  $\bden(S,T) = \frac{1}{|S| |T|}\sum_{i \in S} \sum_{j \in T} A_{ij}$. Observe that density and bidensity are closely related, in that $\bden(S,S) = \den(S) (1-\frac{1}{|S|})$.



\subsection{Random Graphs}
We make use of \emph{Erd\H{o}s-R\'{e}nyi random graphs}. Given $n \in \NN$ and $p \in [0,1]$, the random graph $\G(n,p)$ has vertices $V=\set{1,\ldots,n}$, and its edges include every $e \in \binom{V}{2}$ independently with probability $p$.  Naturally, for $G \sim \G(n,p)$, every cluster $S$ of $G$ has expected density $p$. Moreover, every pair of clusters $S$ and $T$ has expected bi-density $p \left(1-\frac{|S \intersect T|}{|S| |T|}\right)$.\footnote{The discrepancy from $p$ is due to the absence of self-loops in the drawn graphs.} We make use of the following standard probabilistic bound, proved in Appendix \ref{app:prelim}.



\begin{prop}\label{prop:bidensity-whp}
    Let $p \in (0,1)$ and $\alpha > 1 $ be absolute constants (independent of $n$), and let $G \sim \G(n,p)$. There is an absolute constant $\beta = \beta(p,\alpha)$ such that the following holds with high probability for all clusters $X$ and $Y$ with $|X|,|Y| > \beta \log n$. 
\[ \bden_G(X,Y) \leq \alpha p \]
\end{prop}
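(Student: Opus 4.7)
The plan is to combine a per-pair concentration inequality with a union bound over all pairs of clusters of size at least $\beta \log n$. Fix clusters $X, Y$ with $|X| = s$, $|Y| = t$, and let $S := \sum_{i \in X,\, j \in Y} A_{ij}$, so that $\bden_G(X,Y) = S/(st)$. The entries $A_{ij}$ appearing in $S$ are not jointly independent, since $A_{ij} = A_{ji}$ and $A_{ii} = 0$; however, grouping the ordered pairs by unordered edges decomposes
\[
S \;=\; 2\,|E(X \cap Y)| \,+\, |E(X \setminus Y,\, Y \setminus X)| \,+\, |E(X \cap Y,\, X \setminus Y)| \,+\, |E(X \cap Y,\, Y \setminus X)|,
\]
which realizes $S$ as a weighted sum $\sum_k w_k Z_k$ of independent $\mathrm{Bernoulli}(p)$ variables with weights $w_k \in \{1,2\}$ (the weight $2$ arising from edges inside $X \cap Y$, each of which is counted by both orderings in $X \times Y$). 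This yields $\Ex[S] = p(st - |X \cap Y|) \leq p\, st$ and $\sum_k w_k^2 \leq 2 s t$, so Hoeffding's inequality with deviation $u = (\alpha-1) p\, st$ gives
\[
\Pr\!\bigl[\bden_G(X,Y) \geq \alpha p\bigr] \;\leq\; \Pr\!\bigl[S \geq \Ex[S] + u\bigr] \;\leq\; \exp\!\bigl(-(\alpha-1)^2 p^2 \, s t \bigr).
\]

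For the union bound, I would use the crude count $\binom{n}{s}\binom{n}{t} \leq n^{s+t}$ for the number of cluster pairs of sizes $(s,t)$, together with the elementary inequality $s t \geq \tfrac{1}{2}(s+t)\min(s,t) \geq \tfrac{\beta}{2}(s+t)\log n$ valid for $s, t \geq \beta \log n$. The combined exponent becomes $(s+t)\log n \cdot \bigl(1 - (\alpha-1)^2 p^2 \beta / 2\bigr)$; choosing $\beta = \beta(\alpha, p)$ large enough that $(\alpha-1)^2 p^2 \beta / 2 > 2$ makes each summand super-polynomially small, and summing over the $O(n^2)$ values of $(s,t)$ yields failure probability $o(1)$.

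The main obstacle, while modest, lies in the first step: one must correctly handle the dependence structure of $S$, in particular weighting edges inside $X \cap Y$ by $2$ and carrying the small correction $-p|X \cap Y|$ in the mean, rather than naively treating $S$ as a sum of $st$ independent $\mathrm{Bernoulli}(p)$'s. Once $S$ is expressed as a sum of independent weighted Bernoullis, the rest is a routine Chernoff/Hoeffding-plus-union-bound calculation, and the quantitative balance between the concentration $\exp(-\Theta(st))$ and the combinatorial count $\exp(O((s+t)\log n))$ is precisely what determines the correct size threshold $\beta \log n$ and pins down the dependence of $\beta$ on $p$ and $\alpha$.
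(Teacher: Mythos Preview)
Your proposal is correct and follows essentially the same approach as the paper: both handle the dependence in $S$ by grouping the ordered pairs $(i,j)\in X\times Y$ according to unordered edges (so that edges inside $X\cap Y$ receive weight~$2$), apply a concentration bound yielding failure probability $\exp(-\Theta(st))$, and then union-bound over the $n^{s+t}$ cluster pairs and over all sizes $s,t\ge\beta\log n$. The only cosmetic difference is that the paper splits $S$ into two unweighted sums of i.i.d.\ Bernoullis and applies an additive Chernoff bound to each, whereas you keep $S$ as a single weighted sum and apply Hoeffding; this trades a factor of $p$ for $p^{2}$ in the exponent, which is immaterial since $p$ is constant.
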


As in the above proposition, we make references throughout this paper to guarantees which hold \emph{with high probability} over a family of graphs parametrized by the number of nodes $n$. By this we mean that the claimed property holds with probability at least $1-1/\poly(n)$, for some polynomial in the number of nodes.



\section{Planting Cliques and Clique Covers}
\newcommand{\PC}{{\bf PCLIQUE}}
\newcommand{\PCC}{{\bf PCCOVER}}
\label{sec:planted}

\subsection{The Planted Clique Problem}

For our hardness results, we assume that it is hard to recover a planted clique from an Erd\H{o}s-R\'{e}nyi random graph. Specifically, we consider the following  problem for parameters $n,k \in \NN$ and $p \in [0,1]$.

\begin{definition}[The Planted Clique Problem \PC($n$,$p$,$k$)]
Let $G \sim \G(n,p,k)$ be a random graph with vertices $[n]=\set{1,\ldots,n}$, constructed as follows:
\begin{enumerate}
\item Every edge is included in $G$ independently with probability $p$.
\item A set $S \sse [n]$ with $|S| =k$ is chosen uniformly at random.
\item All edges with both endpoints in $S$ are added to $G$.
\end{enumerate}
Given a sample from $G \sim \G(n,p,k)$, recover the planted clique $S$.
\end{definition}
\noindent
We refer to the parameter $p$ as the \emph{background density}, $S$ as the \emph{planted clique}, and $k$ as the \emph{size} of the clique. Typically, we will think of $k$ as a function of $n$, and $p$ as an absolute constant independent of $n$. Our results will hinge on the conjectured hardness of recovering the clique with constant probability. 
We use the following conjecture as our hardness assumption. 


\begin{assumption}\label{assumption1}
 For some function $k=k(n)$ satisfying $k=\omega(\log^2 n)$ and $k=o(\sqrt{n})$, there is no probabilistic polynomial-time  algorithm for $\PC(n,\frac{1}{2},k)$ with constant success probability.
\end{assumption}
By \emph{constant success probability},  we mean that the probability of the algorithm recovering the clique is bounded below by a constant independent of $n$, over the random draw of the graph $G \sim \G(n,p,k)$ as well as the internal random coins of the algorithm.

The problem of recovering a planted clique, as well as the (no harder) problem of distinguishing a draw from $\G(n,p)$ from a draw from $\G(n,p,k)$, has been the subject of much work since it was introduced by \citet{jerrum92} and \citet{kucera95}. Almost all of this work has focused on $p=\frac{1}{2}$. On the positive side, there is a quasipolynomial time algorithm for recovering the clique when $k \geq 2 \log n$. The best known polynomial-time algorithms, on the other hand, can recover planted cliques of size $\Omega(\sqrt{n})$, through a variety of different algorithmic techniques (e.g. \cite{alonclique,mcsherry,coja,feigekrauthgamer,amesvavasis,feigeron,dekel}).  

Despite this extensive body of work, there are no known polynomial-time algorithms for recovering, or even detecting, planted cliques of size $k=o(\sqrt{n})$. There is evidence the problem is hard: \citet{jerrum92} ruled out  Markov chain approach  for $k= o(\sqrt{n})$; \citet{feigeprobable} ruled out algorithms based on the Lov\'{a}sz-Schrijver family of semi-definite programming relaxations for $k=o(\sqrt{n})$; and recently \citet{feldmanclique} defined a family of ``statistical algorithms,'' and ruled out such algorithms for recovering planted cliques of size $k=o(\sqrt{n})$. Consequently, the planted clique conjecture has been used as a hardness assumption in a variety of different applications (e.g. \cite{alontesting, juels, hazankrauthgamer,  minder}).






\subsection{From Planting Cliques to Planting Covers}

For our results, we construct games in which the random state of nature is a node in some graph $G$. Recalling that a signaling scheme is a kind of fractional partition of the states of nature, with each ``fractional part'' corresponding to a signal, we design our games so that the quality of a signal is proportional to the density of that part of the graph. Since a signaling scheme's quality is measured in aggregate over the entire distribution over states of nature, any hardness result must rule out recovering a family of dense clusters (one per signal) scattered throughout the graph, rather than merely a unique dense cluster as in the planted clique problem.  In this section, we define the \emph{planted clique-cover problem} in which many cliques are planted throughout the graph, and prove a kind ``amplification lemma'' extending the hardness of recovering a planted clique to recovering a constant fraction of the planted cover.

The planted clique-cover problem is parametrized by $n,k,r \in \NN$ and $p \in [0,1]$, and defined as follows.
\begin{definition}[The Planted Clique Cover Problem \PCC($n$,$p$,$k$,$r$)]\label{def:plantedcover}
Let $G \sim \G(n,p,k,r)$ be a random graph on vertices $[n]=\set{1,\ldots,n}$, constructed as follows:
\begin{enumerate}
\item Include every edge in $G$ independently with probability $p$.
\item For $i=1$ to $r$:
  \begin{itemize}
  \item Choose $S_i \sse [n]$ with $|S_i| =k$ uniformly at random.
  \item Add all edges with both endpoints in $S_i$  to $G$.
  \end{itemize}
\end{enumerate}
Given a sample from $G \sim \G(n,p,k,r)$, recover a constant fraction of the planted cliques $S_1,\ldots,S_r$.\footnote{By this we mean that the algorithm should output a list of $k$-cliques in $G$, at least $\alpha r$ of which are in  $\set{S_1,\ldots,S_r}$, for some constant $\alpha$ independent of $n$.}
\end{definition}
As in the planted clique problem, we refer to the parameter $p$ as the \emph{background density}, $S_1,\ldots,S_r$ as the \emph{planted cliques}, $k$ as the \emph{size} of each clique, and $r$ as the \emph{number} of cliques. Moreover, we will think of $k$ and $r$ as functions of $n$, and $p$ as an absolute constant independent of $n$. 
Note that the planted clique problem is the special case of the  planted clique-cover problem when $r=1$. For our results, we will use the planted clique-cover problem for $r = \Theta(\frac{n}{k})$, guaranteeing that a constant fraction of the nodes are in at least one of the planted cliques with high probability. 

At first glance, it might appear that planting many cliques as we do here makes the problem easier than planted clique.  Somewhat surprisingly  --- and we elaborate on why later --- this is not the case. 
We exhibit a reduction from the planted clique problem to the planted clique-cover problem with an arbitrary parameter $r$, showing that indeed the planted clique-cover problem is no easier.  We prove the following lemma.


\begin{lemma}\label{lem:amplification}
  Let $k$ and $r$ be arbitrary functions of $n$, and $p$ be an arbitrary constant. If there is a probabilistic polynomial-time algorithm for $\PCC(n,p,k,r)$ with constant success probability, then there is such an algorithm for $\PC(n,p,k)$.
\end{lemma}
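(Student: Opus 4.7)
The plan is to reduce $\PC(n,p,k)$ to $\PCC(n,p,k,r)$ by amplifying a single planted clique into a cover. Given an instance $G \sim \G(n,p,k)$ of $\PC$ with hidden clique $S$, sample $r-1$ further $k$-subsets $S_2,\ldots,S_r$ independently and uniformly from $\binom{[n]}{k}$, and form $H$ by adding to $G$ all edges lying inside any $S_i$. Writing $S_1 = S$, the joint distribution of $(S_1,\ldots,S_r,H)$ matches that of $(S_1,\ldots,S_r,\G(n,p,k,r))$ from Definition~\ref{def:plantedcover}: background edges are iid Bernoulli$(p)$, the $S_i$ are iid uniform, and then all within-$S_i$ edges are inserted. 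We then feed $H$ into the assumed PCCOVER algorithm.

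Let $L$ denote the output list, $\alpha > 0$ the constant recovery fraction guaranteed by the algorithm, $\beta > 0$ its success probability, and $\mathcal{E}$ the event $|L \cap \{S_1,\ldots,S_r\}| \geq \alpha r$, so $\Pr[\mathcal{E}] \geq \beta$. Because $H$ is a symmetric function of the exchangeable tuple $(S_1,\ldots,S_r)$, the probability $\Pr[S_i \in L \wedge \mathcal{E}]$ is the same for every $i$, and therefore
\[
r \cdot \Pr[S_1 \in L \wedge \mathcal{E}] \;=\; \ex\!\left[\,|L \cap \{S_1,\ldots,S_r\}| \cdot \mathbf{1}_{\mathcal{E}}\,\right] \;\geq\; \alpha r \Pr[\mathcal{E}] \;\geq\; \alpha \beta r.
\]
Dividing by $r$, the hidden clique $S = S_1$ lies in $L$ with probability at least $\alpha\beta$, a constant independent of $n$.

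To extract $S$ from the polynomially-sized list $L$, we filter: output any $T \in L$ that is itself a $k$-clique in the \emph{original} graph $G$. Under Assumption~\ref{assumption1}, $k = \omega(\log^2 n)$ is much larger than the clique number of $\G(n,p)$, which is $\Theta(\log n)$ for constant $p<1$. A routine union bound over non-planted $k$-subsets (stratified by their overlap with $S$) then shows that with high probability the sole $k$-clique in $G$ is exactly $S$, so the filter returns $S$ whenever $S \in L$. Combining this with the previous step yields a probabilistic polynomial-time algorithm that solves $\PC(n,p,k)$ with constant success probability.

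The main subtlety is the exchangeability argument: the PCCOVER guarantee only asserts that \emph{some} constant fraction of the planted cliques is recovered, not any particular one, so a priori the algorithm could systematically miss $S_1$ while covering $S_2,\ldots,S_r$. It is only because $S_1,\ldots,S_r$ enter the construction of $H$ in a perfectly symmetric way that we can average over indices and conclude a pointwise recovery probability for $S_1$. The remaining ingredients---constructing $H$ by extra plants, and filtering candidate cliques via $G$---are mechanical.
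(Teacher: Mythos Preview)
Your proof is correct and follows the same approach as the paper: plant $r-1$ additional cliques to turn the $\PC$ instance into a $\PCC$ instance, then use the exchangeability of $(S_1,\ldots,S_r)$ to argue that the original clique $S_1$ is recovered with constant probability. Your treatment is in fact more careful than the paper's on two points: you make the averaging argument quantitative (deriving $\Pr[S_1\in L]\ge \alpha\beta$), and you spell out how to extract $S_1$ from the returned list by checking which candidates are $k$-cliques in the \emph{original} graph $G$, a step the paper leaves implicit. One minor mismatch: the lemma is stated for arbitrary $k$, while your filtering step appeals to Assumption~\ref{assumption1} to ensure $k$ exceeds the clique number of $\G(n,p)$; this is harmless for the paper's application but, strictly speaking, your argument as written only covers $k$ above the $\Theta(\log n)$ threshold.
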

\begin{proof}
The reduction proceeds as follows: Given a graph $G \sim \G(n,p,k)= \G(n,p,k,1)$, we construct a graph $G' \sim \G(n,p,k,r)$ by planting $r-1$ additional $k$-cliques at random, as in Definition \ref{def:plantedcover}. In a sense, we ``continue where planted clique left off'' by adding $r-1$ more cliques placed randomly in the graph. Let $S_1$ denote the original planted clique, and $S_2,\ldots,S_r$ be the ``additional'' cliques planted through the reduction.

The key observation is that the cliques $S_1,\ldots,S_r$ are indistinguishable to any algorithm operating on a sample from $\G(n,p,k,r)$. This is by a symmetry argument: permuting the order in which cliques are planted does not change the distribution $\G(n,p,k,r)$. As a result, any algorithm which recovers a constant fraction of the planted cliques from $G'$ with constant probability must recover each of $S_1,\ldots,S_r$ with constant probability. In particular, applying an algorithm for the planted cover problem to the outcome of our reduction yields a list of cliques which includes $S_1$ --- the original planted clique --- with constant probability. This leads to an algorithm for the planted clique problem with constant success probability.
\end{proof}

Lemma \ref{lem:amplification} is  surprising since it does not appear to hold for the \emph{distinguishing} variants of $\PCC$ and $\PC$. Specifically, when $k=n^\gamma$ for a constant $\gamma < \frac{1}{2}$, whereas it is conjectured that no algorithm can distinguish a sample from $\G(n,\frac{1}{2})$ from a sample from $\G(n,\frac{1}{2},k)$ with constant probability, a simple statistical test succeeds in distinguishing a sample from $\G(n,\frac{1}{2})$ from a sample from $\G(n,\frac{1}{2},k,n/k)$ with constant probability.
The test in question simply counts the edges in the graph. The random graph $\G(n,\frac{1}{2})$ has $\frac{n^2}{4} \pm  O(n)$ edges with constant probability approaching $1$, whereas $\G(n,\frac{1}{2},k,n/k)$ has $\frac{n^2}{4} + \Omega(\frac{n}{k} \cdot k^2) = \frac{n^2}{4} + \Omega(n^{1+\gamma})$ edges in expectation --- well above the confidence interval for the number of edges in $\G(n,\frac{1}{2})$.

\subsection{Approximate Recovery}

To simplify our hardness results, we show that producing a set of vertices with sufficient overlap with a planted clique is polynomial-time equivalent to recovering that entire clique, with high probability. 

\begin{lemma}\label{lem:approx_recovery}
Let $\epsilon > 0$ and $p \leq \frac{1}{2}$ be constants, $k=k(n)$ satisfy $k=\omega(\log^2 n)$ and $k=o(\sqrt{n})$, and $r = O(\frac{n}{k})$. There is a probabilistic polynomial-time algorithm which takes as input $G \sim \G(n,p,k,r)$ and a cluster $T \sse [n]$ , and outputs every planted $k$-clique $S$ in $G$ for which $|T \intersect S| > \epsilon |T \union S|$, with high probability.
\end{lemma}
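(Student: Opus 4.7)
The algorithm uses $T$ as a noisy seed to isolate a small candidate set $W \subseteq [n]$ that contains every qualifying planted clique, and then recovers each clique from $G[W]$ using standard planted-clique techniques.

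\textbf{Step 0: Trivial case.} If $|T| > k/\epsilon$, then for any planted $k$-clique $S$, $\epsilon|T \cup S| \geq \epsilon|T| > k \geq |T \cap S|$, so no qualifying $S$ exists; the algorithm outputs $\emptyset$ and halts. Otherwise, rearranging the Jaccard bound $|T \cap S| > \epsilon|T \cup S|$ shows every qualifying $S$ must have $|T \cap S| \geq \epsilon' k$, where $\epsilon' := \epsilon/(1+\epsilon)$. Let $\mathcal{S}^*$ denote the set of qualifying cliques.

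\textbf{Step 1: Identify candidate set $W$ via degree to $T$.} For each $v \in [n]$, compute $d(v) := |N(v) \cap T|$, and let $W := \{v : d(v) \geq p|T| + \delta k\}$ with $\delta := (1-p)\epsilon'/2$. For any $v$ in a qualifying clique $S$, $v$ is forced to have at least $|T \cap S| - 1 \geq \epsilon' k - 1$ neighbors in $T$ plus a $\mathrm{Binom}(|T \setminus S|, p)$ random contribution; a Chernoff bound (with deviation $O(\sqrt{|T|\log n}) = o(k)$, using $|T| \leq k/\epsilon$ and $k = \omega(\log^2 n)$) combined with a union bound gives $\bigcup_{S \in \mathcal{S}^*} S \subseteq W$ w.h.p. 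Conversely, a symmetric Chernoff argument shows every vertex of $W$ lies w.h.p.\ in some planted clique $S'$ with $|T \cap S'| \gtrsim \epsilon' k / 2$. Counting vertex-clique incidences inside $T$ (each vertex is in $O(\log n)$ planted cliques w.h.p.\ by a Poisson tail) bounds $\sum_{S'} |T \cap S'| \leq |T| \cdot O(\log n) = O(k\log n/\epsilon)$, so the number of such $S'$ is $O(\log n/\epsilon^2)$, giving $|W| = O(k \log n / \epsilon^2)$.

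\textbf{Step 2: Recover each planted clique from $G[W]$.} Since each $S \in \mathcal{S}^*$ is fully contained in $W$, and $|W| = O(k \log n)$ satisfies $k/\sqrt{|W|} = \Omega(\sqrt{k/\log n}) = \omega(1)$ by $k = \omega(\log^2 n)$, the $O(\log n)$ planted $k$-cliques in $G[W]$ lie well above the recoverable threshold. I will invoke a standard planted-clique algorithm on $G[W]$: for instance, iteratively extracting top eigenvectors of the centered adjacency matrix $A[W] - p\,J$, whose spectral noise has operator norm $O(\sqrt{|W|}) = O(\sqrt{k\log n}) = o(k)$, dominated by the $\Omega(k)$ signal of each planted clique; pairwise near-disjointness $|S \cap S'| = O(\log n)$ (a consequence of $k = o(\sqrt{n})$ via a hypergeometric tail bound and union bound over pairs) ensures the signals are orthogonal up to $o(k)$ error, so top eigenvectors can be thresholded to recover the individual cliques. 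Finally, the algorithm deduplicates the recovered $k$-cliques, verifies each is a clique of $G$, and outputs those satisfying the Jaccard condition against $T$; this set contains every qualifying $S$.

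\textbf{Main obstacle.} The main technical hurdle is establishing the bound $|W| = O(k \log n)$, which is what makes Step 2 tractable and prevents the candidate set from swelling to polynomial size. This bound rests on the two tail estimates mentioned in Step 1: the Poisson concentration on per-vertex clique membership, and the counting argument on intersections with $T$. The hypothesis $k = \omega(\log^2 n)$ is then used pivotally in two places: in Step 1, so that the Chernoff deviation $O(\sqrt{|T|\log n})$ in each degree-to-$T$ count is $o(k)$; and in Step 2, so that the spectral signal-to-noise ratio $k/\sqrt{|W|} = \omega(1)$ guarantees planted-clique recovery. The hypothesis $k = o(\sqrt{n})$ is used separately to ensure distinct random planted cliques are nearly disjoint, which precludes cross-interference in both Steps 1 and 2.
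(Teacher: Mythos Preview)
Your approach differs substantially from the paper's, and there is a genuine gap already in Step~1. You model $d(v)$ for $v\in S$ as $|T\cap S|-1$ plus a $\mathrm{Binom}(|T\setminus S|,p)$ contribution and then apply Chernoff. But the lemma must hold for clusters $T$ that may depend on $G$ --- in the paper's application $T$ is produced by an algorithm that sees $G$ --- and for such $T$ the background edges from $v$ to $T\setminus S$ are \emph{not} distributed as $\mathrm{Binom}(|T\setminus S|,p)$. Concretely, an adversary can take $T\setminus S$ to consist entirely of background non-neighbors of some fixed $v_0\in S$; then $d(v_0)=|T\cap S|-1$, and whenever $\epsilon<p$ one can choose $|T\cap S|$ and $|T\setminus S|$ consistent with the Jaccard condition so that $|T\cap S|-1<p|T|+\delta k$, forcing $v_0\notin W$ and hence $S\not\subseteq W$. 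The degree-to-$T$ threshold is thus not robust to graph-dependent $T$, and no simple adjustment of the threshold repairs this (lowering it lets in vertices whose entire neighborhood the adversary can place inside $T$). A secondary issue: your intermediate claim that every $v\in W$ lies in some $S'$ with $|T\cap S'|\gtrsim\epsilon'k/2$ is also false, since several small-overlap cliques through $v$ can accumulate; the bound $|W|=O(k\log n)$ is still obtainable by direct double-counting of $\sum_{v} d^+(v)$, but this is moot given the first problem.

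Even granting $S\subseteq W$ and $|W|=O(k\log n)$, Step~2 is under-justified. The set $W$ is chosen adaptively from $G$, so the background edges inside $G[W]$ are conditioned rather than i.i.d.; in particular $|W\cap T|$ can be $\Theta(k)$, and the bias this induces on edges into $W\cap T$ can push $\|A^-[W]-pJ\|$ well beyond $O(\sqrt{|W|})$. Moreover $G[W]$ is contaminated by fragments of many non-qualifying planted cliques, so the foreground is not simply a near-orthogonal sum of the qualifying cliques' indicators. Standard planted-clique spectral analysis does not apply here without substantial additional argument.

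The paper's proof avoids both pitfalls by never thresholding on degrees and never restricting to an adaptive subgraph. It samples $\Theta(\epsilon^{-1}\log n)$ vertices uniformly from $T$ and brute-force enumerates all $\Theta(\log n)$-subsets $R$ of the sample; with high probability some such $R$ lies entirely in $T\cap S$. For that $R$ it takes the common neighborhood $\tilde S$ of $R$ in the full graph $G$, then keeps only those vertices of $\tilde S$ with at least $k-1$ neighbors in $\tilde S$. The analysis rests on structural properties of $G$ that hold with high probability \emph{uniformly over all} $T$ (via the bi-density bound): at most $O(\log n)$ vertices $v\notin S$ have $|E^-(v,S\cap T)|>0.6|S\cap T|$; no $v\notin S$ has $|E^-(v,S)|>0.6k$; and $|E^+(v,S)|=O(\log^2 n)=o(k)$ for every $v\notin S$. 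These give $|\tilde S\setminus S|=O(\log n)$ and show the second filter eliminates exactly $\tilde S\setminus S$. Because the randomness of the sample $R$ is fresh \emph{given} $T$, the Chernoff step there is valid regardless of how $T$ was chosen --- this is precisely the decoupling your deterministic threshold lacks.
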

In other words, to recover a planted $k$-clique $S$ it suffices to produce a set $T$ of size $O(k)$ for which $|T \intersect S| = \Omega(k)$. The analogous statement for the (single) planted clique problem is folklore knowledge. Our proof is similar, though requires some additional accounting because our planted cliques may overlap. We relegate the proof of Lemma \ref{lem:approx_recovery} to  Appendix \ref{app:approx_recovery}.


\section{Zero-Sum Games}
\label{sec:zero}
 
In this section, we show the intractability of optimal signaling in zero sum games, when the objective is maximizing one player's payoff at equilibrium. We prove two results which follow from Assumption \ref{assumption1}: Hardness of optimal signaling  for \emph{explicit} zero sum games --- those games with a polynomial number of strategies and matrices given explicitly, and hardness of approximation for \emph{implicit} zero-sum games. Our former result, for explicit games, requires the real numbers in the payoff matrices to scale with the input size, and therefore does not qualify as a hardness of approximation result per se. Our latter result for implicitly-described games does rule out a constant additive approximation relative to range of possible payoffs, and crucially holds for a game in which equilibrium computation is \emph{tractable}, in the sense described in  Section~\ref{prelim:equilibria}.

\begin{theorem}\label{thm:zerohardness}
  Assumption \ref{assumption1} implies that there is no polynomial-time algorithm which computes a signaling scheme maximizing a player's payoff in an explicitly represented 2-player Bayesian zero-sum game.
\end{theorem}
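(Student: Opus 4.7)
The plan is to reduce from the planted clique cover problem. By Lemma~\ref{lem:amplification}, a polynomial-time algorithm for $\PCC(n,\tfrac{1}{2},k,r)$ with constant success probability would contradict Assumption~\ref{assumption1}, for any choice of $r$. I will work with $r = \Theta(n/k)$ (so that the planted cliques approximately cover $V$ with high probability) and exhibit, for every graph $G$, an explicitly representable Bayesian zero-sum game on $O(n)$ states and $O(n)$ strategies per player whose optimal symmetric signaling scheme encodes the planted cover of $G$.

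Concretely, I would let the set of states of nature $\Theta$ be the vertex set $V=[n]$ of $G$, take the prior $\di$ to be uniform over $V$, and give each player $V$ itself as pure strategies. The payoff matrices $\{\A^\theta\}_{\theta\in V}$ are crafted in terms of the adjacency matrix of $G$ and the identity of $\theta$, scaled by a $\poly(n)$ factor (this is why the hardness is not an approximation result), so that for any posterior $\dii_\sigma$ supported on a cluster $S\subseteq V$, the induced subgame value $u_r(\A^\sigma)=\max_y\min_j (y^\intercal \A^\sigma)_j$ behaves essentially as a weighted bi-density $\bden_G(S,S)$ favoring clusters that are both large and highly connected. Since a signaling scheme is exactly a convex decomposition of $\di$, and $\di$ is uniform, an arbitrary signaling scheme corresponds to a fractional partition of $V$; the total objective $\sum_\sigma \alpha_\sigma u_r(\A^\sigma)$ will then be (up to additive error) the average bi-density of the fractional partition, weighted by part size.

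Given this correspondence, I would prove the reduction in two halves. \emph{Completeness:} the signaling scheme whose signals label the planted cliques $S_1,\dots,S_r$ (with leftover vertices placed in a separate ``junk'' signal) achieves value close to $1$, because each planted clique is dense by construction and contributes its maximum possible subgame value. \emph{Soundness:} by Proposition~\ref{prop:bidensity-whp}, with high probability over $G\sim\G(n,\tfrac{1}{2},k,r)$ the bi-density between any pair of clusters of size $\Omega(\log n)$ that is not aligned with a planted clique is at most $\tfrac{1}{2}+o(1)$, so any signaling scheme whose signal supports have small intersection with every $S_t$ can do no better than the ``background'' value $\tfrac{1}{2}+o(1)$. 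A standard averaging argument over the signals, weighted by their probabilities $\alpha_\sigma$, then shows that any signaling scheme attaining value within a sufficiently small constant of optimal must produce a constant fraction of signals $\sigma$ whose posterior support $T_\sigma$ satisfies $|T_\sigma\cap S_t|\geq \epsilon|T_\sigma\cup S_t|$ for some planted clique $S_t$ and some absolute constant $\epsilon>0$. Feeding each such $T_\sigma$ into the approximate-recovery algorithm of Lemma~\ref{lem:approx_recovery} yields a list of $k$-cliques containing a constant fraction of $\{S_1,\dots,S_r\}$ with high probability. Composed with a polynomial-time optimal signaling algorithm, this would solve $\PCC(n,\tfrac{1}{2},k,r)$ with constant success probability, yielding the contradiction.

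The main technical obstacle is the game construction and the tight alignment of $u_r(\A^\sigma)$ with a quantity like $\bden_G(\mathrm{supp}\,\dii_\sigma,\mathrm{supp}\,\dii_\sigma)$. One must ensure that the value function $f(\dii_\sigma)=u_r(\A^\sigma)$ is (a) essentially additive across the fractional parts of the partition so that $\sum_\sigma \alpha_\sigma f(\dii_\sigma)$ is controlled by the aggregate density, (b) sharply separates ``near-clique'' posteriors from generic posteriors whose support mixes many non-planted vertices, and (c) remains robust to the row-player's freedom to randomize over strategies outside the support of $\dii_\sigma$. The payoff scaling by $\poly(n)$ is used precisely to amplify the density gap between cliques and the background $\tfrac{1}{2}$ into a gap that survives the minimax computation, while the explicit compact description of the $n$ matrices $\A^\theta$ keeps the reduction polynomial-time.
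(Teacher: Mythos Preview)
Your outline matches the paper's approach: reduce from $\PCC(n,\tfrac12,k,r)$ with $r=\Theta(n/k)$ via Lemma~\ref{lem:amplification}; take states $=V$ with uniform prior and $O(n)$ strategies per player; establish completeness by signaling the planted cliques; establish soundness by combining Proposition~\ref{prop:bidensity-whp} with an averaging argument and Lemma~\ref{lem:approx_recovery}. The paper's instantiation is the \emph{network security game} with $d=1$: the attacker picks $a\in V$, the defender picks a single node $D$, and $\A^\theta(a,D)=|\{(\theta,a)\}\cap E|-\rho\,|D\cap\{\theta,a\}|$ with $\rho=k/150$.

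The genuine gap in your plan is the mechanism that makes the subgame value prefer \emph{large} clusters. Bi-density alone does not do this: $\G(n,\tfrac12)$ already contains $\Theta(\log n)$-cliques with high probability, so without a size penalty a signaling scheme could partition $V$ into such small cliques and attain value near $1$ while missing every planted $k$-clique, and soundness would fail. The paper's $-\rho|D\cap\{\theta,a\}|$ term is precisely this penalty: because $\rho d=\Theta(k)$, the defender can drive the attacker's payoff below zero whenever either the posterior $x_\sigma$ or the attacker's strategy $y_\sigma$ places mass exceeding $2/(\rho d)$ on any single node, effectively forcing both to spread over $\Theta(k)$ vertices (this is the content of Proposition~\ref{prop:zero1}). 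Your concern (c), about the attacker randomizing \emph{outside} $\mathrm{supp}(\dii_\sigma)$, is secondary; the real obstacle is preventing concentration on too \emph{few} nodes, and while your ``scaled by $\poly(n)$'' remark is correct in spirit (indeed $\rho=\Theta(k)$ grows with $n$), you have not identified what that scaling buys. The soundness-side extraction is also more involved than your sketch suggests: after zeroing out high-mass entries of $x_\sigma$ and $y_\sigma$, the paper solves an LP per signal to round the attacker's strategy to a uniform distribution over exactly $\rho d/2=\Theta(k)$ nodes $T_\sigma$, then decomposes the adjacency matrix into background and planted-clique edges to isolate an $\Omega(\epsilon)$ contribution from the planted cliques before carrying out the averaging that feeds Lemma~\ref{lem:approx_recovery}.
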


\begin{theorem}\label{thm:zeroapxhardness}
Assumption \ref{assumption1} implies there is an implicitly-described, yet tractable, Bayesian zero-sum game with payoffs in $[-1,1]$, and an absolute constant $\epsilon$, such that there is no polynomial-time algorithm for computing a signaling scheme which $\eps$-approximately maximizes (in the additive sense) one player's payoff.
\end{theorem}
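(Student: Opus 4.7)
I would reduce from the planted clique-cover problem $\PCC(n,\tfrac{1}{2},k,r)$ with $r=\Theta(n/k)$, whose hardness under Assumption~\ref{assumption1} is exactly Lemma~\ref{lem:amplification}. Given $G\sim\G(n,\tfrac{1}{2},k,r)$, I build a Bayesian zero-sum game $\Gamma_G$ whose states of nature are the vertices of $G$ under the uniform prior $\di$ and whose payoffs lie in $[-1,1]$. One of the players (say the column player) will have exponentially many combinatorial pure strategies --- e.g., vertex subsets of size $k$, or non-edges paired with a witness vertex --- accounting for the implicit representation, while the row player has only polynomially many. The payoffs are engineered so that, writing $f(\dii):=u_r(\Ex_{\theta\sim\dii}[\A^\theta])$, the quantity $f(\dii)$ is close to $1$ when $\mathrm{supp}(\dii)$ induces a near-clique and bounded away from $1$ by an absolute constant when $\mathrm{supp}(\dii)$ has bi-density only $\approx \tfrac{1}{2}$. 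A natural template is a ``certification'' game in which the row player tries to match the hidden $\theta$ while the column player tries to exhibit a witness non-edge inside the row player's support; equilibrium in the induced full-information game then reduces, via LP duality and a polynomial-time separation oracle (highest-weight non-edge), to a polynomial-size LP, yielding tractability in the sense of Section~\ref{sec:games}.

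\textbf{Value of the optimum.} With high probability the planted cliques $S_1,\ldots,S_r$ cover a constant fraction of $V$, so there is a signaling scheme whose signals correspond (after a fractional partition of $\di$) to the planted cliques plus a ``leftover'' signal; by design this scheme attains $f\approx 1$ on each planted-clique signal and at most $\tfrac{1}{2}+o(1)$ on the leftover, yielding a target value $V^\star$ that is an absolute constant bounded away from what any ``clique-oblivious'' scheme can achieve. Conversely, Proposition~\ref{prop:bidensity-whp} (together with its straightforward extension from $\G(n,p)$ to $\G(n,p,k,r)$, obtained by ``subtracting'' the planted cliques from the adjacency) shows that any cluster of superlogarithmic size not overlapping some planted $S_i$ in $\Omega(k)$ vertices has bi-density $\leq \tfrac{1}{2}+o(1)$, and therefore $f\leq \tfrac{1}{2}+o(1)$ on any such support.

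\textbf{Robustness and recovery.} Suppose an algorithm outputs a scheme $\sig$ with value at least $V^\star-\eps$ for a sufficiently small absolute constant $\eps$. A Markov-type averaging over signals, combined with the Lipschitz dependence of $f$ on the bi-density of its support, forces a constant fraction of the total signal mass $\alpha_\sigma$ to sit on signals $\sigma$ with $f(\dii_\sigma)\geq 1-O(\sqrt{\eps})$; each such $\mathrm{supp}(\dii_\sigma)$ must therefore overlap some planted $S_i$ in $\Omega(k)$ vertices. Feeding each such support into the algorithm of Lemma~\ref{lem:approx_recovery} recovers a constant fraction of $S_1,\ldots,S_r$, contradicting the hardness of $\PCC$ via Lemma~\ref{lem:amplification} and Assumption~\ref{assumption1}.

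\textbf{Main obstacle.} The chief difficulty is the joint design of the payoffs. I need (i) a tractable minimax under the implicit representation (enabling the notion of ``tractable'' from Section~\ref{sec:games}), (ii) payoffs in $[-1,1]$ rather than scaling with the instance, so the additive $\eps$-gap is measured against a fixed range, (iii) a quantitative signal-wise gap between ``near-clique'' and ``random'' posteriors, and (iv) enough smoothness of $f$ in the bi-density of $\mathrm{supp}(\dii)$ that $\eps$-approximate optimality forces supports dense enough to trigger Lemma~\ref{lem:approx_recovery}. Balancing (i)--(iv) against the $[-1,1]$ normalization is what ultimately pins down the absolute constant $\eps$ in the statement.
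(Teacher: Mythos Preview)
Your high-level reduction structure --- reduce from $\PCC(n,\tfrac{1}{2},k,r)$ via Lemma~\ref{lem:amplification}, exhibit a good scheme from the planted cliques, and recover cliques from any near-optimal scheme using Lemma~\ref{lem:approx_recovery} --- matches the paper. The paper instantiates this with a concrete \emph{network security game}: the state of nature $\theta$ and the attacker's pure strategy $a$ are vertices, the defender picks a set $D$ of at most $d$ vertices, and the attacker's payoff is $|\{(\theta,a)\}\cap E| - \rho|D\cap\{\theta,a\}|$. For Theorem~\ref{thm:zeroapxhardness} one sets $d=k/150$ and $\rho=1$, so payoffs are bounded and the defender's strategy space is exponential but summarized by a point in the matroid polytope $\{z\in[0,1]^n:\sum_i z_i\le d\}$, yielding tractability. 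Your ``certification game'' is a different construction, and if it could be made to work it would be an alternative route; but as written it is only a template, and more importantly your soundness step has a real gap.

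The gap is that you repeatedly treat a posterior $\dii_\sigma$ as if it were its support. You write ``each such $\mathrm{supp}(\dii_\sigma)$ must therefore overlap some planted $S_i$ in $\Omega(k)$ vertices'' and speak of ``the bi-density of $\mathrm{supp}(\dii)$.'' But nothing prevents $\dii_\sigma$ from being supported on all of $V$ while placing most of its mass on a clique, in which case $\mathrm{supp}(\dii_\sigma)=V$ is useless for Lemma~\ref{lem:approx_recovery}, and $\bden(\mathrm{supp}(\dii_\sigma),\mathrm{supp}(\dii_\sigma))\approx\tfrac{1}{2}$ even though $f(\dii_\sigma)$ is large. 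Conversely, $\dii_\sigma$ could put nearly all its mass on a few vertices, so that $f(\dii_\sigma)$ is large for trivial reasons unrelated to any planted clique. The paper resolves exactly this issue, and it is the technical heart of the soundness proof: the defender's $-\rho|D\cap\{\theta,a\}|$ term is engineered so that any vertex with posterior or attacker mass exceeding $2/(\rho d)=\Theta(1/k)$ can be profitably defended, which lets one \emph{zero out} such entries at bounded loss (Proposition~\ref{prop:zero1}); one then rounds the truncated attacker strategy to an extreme point of $\{z:\|z\|_\infty\le 2/(\rho d),\ \|z\|_1\le 1\}$, which is necessarily uniform over a set $T_\sigma$ of size $\rho d/2=\Theta(k)$ (Proposition~\ref{prop:zero3}); only then does Proposition~\ref{prop:bidensity-whp} apply to bound the background contribution (Proposition~\ref{prop:zero4}), and the residual foreground contribution forces $|T_\sigma\cap S_i|=\Omega(k)$ for some $i$. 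Your proposal contains no analogue of this distribution-to-cluster extraction, and without it the invocation of Lemma~\ref{lem:approx_recovery} does not go through.
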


For both our results, we use variants of the following \emph{security game} involving two players, known as an \emph{attacker} and a \emph{defender}. In a security game, an attacker chooses a target to attack, and a defender chooses targets to defend. The attacker's payoff depends on both his choice of target, and the extent to which said target is protected by the defender. We consider a security game on a network, in which an attacker is looking to take down edges of the network, and a defender is looking to defend those edges. The state of nature determines a vulnerable node $\theta$, and an attacker can only take down edges adjacent to $\theta$ by attacking its other endpoint. The defender, on the other hand, can select a small set of nodes and protect the edges incident on them from attack.

\begin{definition}[Network Security Game]
  Instances of this Bayesian zero-sum security game are described by an undirected graph $G=(V,E)$ with $n$ nodes and $m$ edges, representing a communication network, an integer $d\geq 1$ equal to the number of nodes the defender can simultaneously protect, and a real number $\rho \geq 0$ equal to the utility gain to the defender for protecting a vulnerable or attacked node. States of nature correspond to vertices $V$. When the realized state of nature is $\theta = v$, this indicates that the edges incident on  node $v$ are vulnerable to attack. We assume that $\theta \sim \di$, where the prior $\di$ is the uniform distribution over $V$. The attacker's strategies also correspond to the nodes $V$. The defender's strategies correspond to subsets of $V$ of size at most $d$, representing the choice of nodes to defend. When the state of nature is $\theta \in V$, the attacker attacks $a \in V$, and the defender defends $D \sse V$ with $|D| \leq d$, the payoff of the attacker is defined as follows:
\[   \A^\theta(a,D) = | \{(\theta,a)\} \cap E| - \rho |D \cap \{\theta,a\} |\]


\end{definition}
When $d$ is a constant, the size of the matrices $\A^\theta$ is polynomial in the representation size of the game, and we can think of the game as being represented \emph{explicitly} as a set of $n$ matrices $\A^\theta \in [-2\rho,1]^{n \times \binom{n}{d}}$. However, when $d$ is superconstant, the game matrices have a superpolynomial $\binom{n}{d}$ number of columns. Nevertheless, even then equilibrium computation is still \emph{tractable} in the sense described in Section \ref{prelim:equilibria} --- i.e., our representation permits computation of the equilibrium, and corresponding utilities, for every posterior distribution $\dii \in \Delta_n$ over states of nature. This is a consequence of the ``low dimensional'' nature of the defender's mixed strategy space, and linearity of players' utilities in this low-dimensional representation. Specifically, a mixed strategy of the defender can be summarized as a vector in the matroid  polytope $\P_d = \set{ z \in [0,1]^n : \sum_{i=1}^n z_i \leq d}$. A vector $z \in \P_d$  encodes the probability by which the defender defends each target. It is not hard to see that this representation is loss-less, in that (a)  the vector $z$ summarizing a defender's mixed strategy, together with the attacker's mixed strategy $y \in \Delta_n$, suffices for computing the payoff of each player in a subgame with beliefs $\dii$ --- specifically, using $A$ to denote the adjacency matrix of graph $G$, the attacker's payoff is $x^\intercal A y - \rho (z^\intercal x + z^\intercal y)$;  (b) Given a vector $z \in \P_d$, a corresponding mixed strategy of the defender with small support can be efficiently recovered as the convex decomposition of $z$ into the corner points of the matroid  polytope $\P_d$;\footnote{This is the constructive version of  Caratheodory's theorem, as shown by \citet{GLS}.}  and (c) Given a vector $z \in \P_d$ summarizing the defender's mixed strategy in a subgame with beliefs $\dii$, a best response for the attacker can be computed efficiently.\footnote{Recall from Section \ref{prelim:equilibria} that (a) and (c) are equivalent to the analogous tasks (payoff and equilibrium computation, respectively) in the complete-information zero-sum game $\A^\dii = \Ex_{\theta \sim \dii} \A^\theta$.} Together, observations (a), (b), and (c) imply that the minimax equilibrium of this game, as well as its optimal value, can be computed in time polynomial in $n$ by linear programming.\footnote{For more on  equilibrium computation  in implicitly-described zero-sum games, see  \cite{dueling}.}

Theorems \ref{thm:zerohardness} and \ref{thm:zeroapxhardness} follow from the two lemmas below, as well as Lemma \ref{lem:amplification}.  Specifically, Theorem~\ref{thm:zerohardness} instantiates both lemmas below with $d=1$ and $\rho = \frac{k}{150}$, and Theorem \ref{thm:zeroapxhardness} sets $d=\frac{k}{150}$ and $\rho=1$.

\begin{lemma} \label{lem:zeroopt}
  Let $k$ and $r$ be such that  $r=\frac{3n}{k}$. For a network security game on $G \sim \G(n,\frac{1}{2},k,r)$ with $d,\rho \geq 1$ satisfying  $\rho d =  \frac{k}{150}$, with high probability there is a signaling scheme attaining expected attacker utility at least $0.8$.
\end{lemma}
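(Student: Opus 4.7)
The plan is to build a signaling scheme directly from the planted-clique structure: each signal corresponds to a planted clique $S_i$, and the scheme reveals (stochastically) which clique contains the realized state $\theta$. For each $\theta \in V$, let $C_\theta = \{i \in [r] : \theta \in S_i\}$; I set $\sig(\theta, i) = 1/|C_\theta|$ for $i \in C_\theta$, and emit a null signal $\perp$ when $C_\theta = \emptyset$. A direct computation yields signal probability $\alpha_i = W_i/n$ and posterior $\dii_i$ supported on $S_i$ with $\dii_i(\theta) = 1/(|C_\theta|\,W_i)$, where $W_i := \sum_{\theta' \in S_i} 1/|C_{\theta'}|$. The total non-null signal mass equals $|\bigcup_i S_i|/n$, which by a standard coupon-collector-style concentration equals $1 - (1-k/n)^r + o(1) \approx 1 - e^{-3}$ with high probability, since $rk/n = 3$.

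For the subgame at a planted signal $i$, I lower bound the attacker's value by the payoff of the uniform strategy $y = \mathrm{Unif}(S_i)$. Since $S_i$ is a planted clique, the adjacency entry $A_{\theta,a}$ equals $1$ for distinct $a,\theta \in S_i$, so the edge-hit term equals $1 - \Pr_{\theta \sim \dii_i, a \sim y}[a = \theta] = 1 - 1/k$. The defender's best response picks the top-$d$ values of $\dii_i(v) + y_v$, and using $\dii_i(v) \leq 1/W_i$ and $y_v = 1/k$ on $S_i$, the attacker's loss from the defender is at most $\rho d(1/W_i + 1/k)$. The crux is to show $W_i = \Omega(k)$ for \emph{every} $i$ with high probability. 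Conditional on $\theta \in S_i$, the quantity $|C_\theta|$ is distributed as $1 + \mathrm{Bin}(r-1, k/n)$, whose reciprocal has expectation bounded below by an absolute constant $c_0 > 0$ (indeed $c_0 \to (1-e^{-3})/3$ by Poisson convergence). The $k$ summands of $W_i$ are not independent, but their pairwise correlations are $O(k/n) = o(1)$, so either a variance bound or a bounded-differences inequality yields $W_i \geq c_0 k/2$ with probability $1 - e^{-\Omega(k)}$; a union bound over the $r = O(n/k)$ cliques is absorbed by the assumption $k = \omega(\log^2 n)$. Substituting gives per-subgame attacker payoff at least $1 - 1/k - \rho d(1/W_i + 1/k) = 1 - O(\rho d/k) = 1 - O(1/150)$.

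For the null subgame it suffices that the attacker's value is nonnegative up to $o(1)$: by playing $y = \mathrm{Unif}(V)$, the defender loss is $\rho d \cdot O(1/n) = o(1)$ (using $\rho d = k/150$ and $k = o(\sqrt n)$), while the edge-hit term is nonnegative in expectation. Aggregating, the total expected attacker utility is at least $(1 - e^{-3})(1 - O(1/150)) - o(1)$, which comfortably exceeds $0.8$. The main obstacle I anticipate is the uniform-in-$i$ concentration of $W_i$: the correlations induced by the shared vertex set between overlapping cliques $S_i$ and $S_j$ must be handled carefully so that the Chernoff/union-bound argument survives with room to spare for both the $O(1/150)$ additive loss and the coverage factor $1 - e^{-3}$. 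Once $W_i = \Theta(k)$ is established uniformly, the remaining bookkeeping is elementary and the $1/150$ slack in $\rho d$ absorbs all lower-order terms.
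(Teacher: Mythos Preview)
Your approach is correct and shares the core idea with the paper's proof --- reveal which planted clique contains $\theta$ and have the attacker randomize over that clique --- but the execution differs and is more involved than necessary. The paper uses a \emph{deterministic} partition $\hat S_i = S_i \setminus \bigcup_{j<i} S_j$ (with a residual part $\hat S_0$) and has the attacker play uniformly over $\hat S_i$; because $\alpha_i = |\hat S_i|/n$, the per-signal defender loss $\alpha_i \cdot 2\rho d / |\hat S_i|$ collapses to $2\rho d/n$ \emph{regardless of how small $|\hat S_i|$ is}, so the only probabilistic input needed is the coverage bound $\sum_{i\ge 1}\alpha_i \geq 0.9$. Your randomized scheme in fact enjoys the very same cancellation: since $\alpha_i = W_i/n$, one gets directly
\[
\sum_{i} \alpha_i\, \rho d\Bigl(\tfrac{1}{W_i} + \tfrac{1}{k}\Bigr) \;=\; \rho d\Bigl(\tfrac{r}{n} + \tfrac{|\bigcup_i S_i|}{nk}\Bigr) \;\le\; \tfrac{4\rho d}{k} \;=\; \tfrac{4}{150},
\]
which already suffices for the $0.8$ bound. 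Hence the ``main obstacle'' you flag --- uniform-in-$i$ concentration of $W_i$ --- is not needed at all; you can bypass it by aggregating before bounding, exactly as the paper does. Your concentration argument can be made to work (e.g.\ via the lower bound $W_i \geq |\{\theta\in S_i : C_\theta = \{i\}\}|$, a sum of negatively associated indicators with mean $\approx e^{-3}$, so Chernoff plus a union bound over $r$ cliques goes through when $k=\omega(\log n)$), but it is a detour that the paper's aggregate bookkeeping avoids entirely.
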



\begin{lemma} \label{lem:zerorecover}
Let $\epsilon>0$ be an absolute constant, $k$ satisfy $k=\omega(\log^2 n)$ and $k=o(\sqrt{n})$, and $r=\Theta(\frac{n}{k})$. For a network security game on $G \sim \G(n,\frac{1}{2},k,r)$ with with $d,\rho \geq 1$ satisfying  $\rho d = \Theta(k)$, there is a probabilistic polynomial-time algorithm which, given any signaling scheme obtaining attacker utility at least $0.5+ \epsilon$, outputs an  $\Omega(\eps)$ fraction of the planted cliques in $G$, with high probability.
\end{lemma}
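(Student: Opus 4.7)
The plan is to give a polynomial-time algorithm that, for each signal $\sigma$, computes the attacker's and defender's minimax strategies $(q_\sigma, z_\sigma)$ by linear programming in the low-dimensional matroid-polytope representation, and produces a polynomial-size family of candidate clusters $T_\sigma^\tau = \{v : q_\sigma(v) \ge \tau\}$ (and analogously for $\dii_\sigma$) over a geometric grid of thresholds $\tau$. Each candidate is fed to Lemma~\ref{lem:approx_recovery}, and the algorithm outputs the union of recovered $k$-cliques; the goal is to show $\Omega(\eps r)$ distinct planted cliques are recovered w.h.p.\ over $G \sim \G(n, \tfrac{1}{2}, k, r)$.

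The core analytic step decomposes $A \le A_{\mathrm{rand}} + A_{\mathrm{planted}}$, where $A_{\mathrm{planted}}(u,v) = \one[u\ne v,\,\exists i:\,u,v\in S_i]$ and $A_{\mathrm{rand}}$ is the background Erd\H{o}s-R\'enyi adjacency.  A dyadic level-set decomposition of $\dii_\sigma$ and $q_\sigma$ combined with Proposition~\ref{prop:bidensity-whp} (invoked with slack $\alpha = 1 + \eps/10$) uniformly bounds $\dii_\sigma^\intercal A_{\mathrm{rand}} q_\sigma \le \tfrac{1}{2} + o(1)$. The defender's penalty, combined with $u(\sigma) \ge \tfrac{1}{2} + \eps$, forces $\rho\cdot\mathrm{top}\text{-}d(\dii_\sigma + q_\sigma) \le \tfrac{1}{2}-\eps$, so at most $O(d)$ heavy atoms exist and all remaining entries of $\dii_\sigma + q_\sigma$ are $O(1/k)$; sub-$\beta\log n$ level sets carry only $o(1)$ aggregate mass by virtue of $k = \omega(\log^2 n)$. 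Averaging over signals and expanding $A_{\mathrm{planted}}$ via union bound yields
\[
    \sum_i g_i \;\ge\; \eps/2, \qquad \text{where } g_i := \sum_\sigma \alpha_\sigma\, \dii_\sigma(S_i)\, q_\sigma(S_i) \le \di(S_i) = k/n,
\]
so a set $B$ of $\Omega(\eps r)$ indices satisfies $g_i = \Omega(k/n)$.

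For each $i \in B$, Markov's inequality identifies a signal $\sigma_i$ with $\dii_{\sigma_i}(S_i)\, q_{\sigma_i}(S_i) = \Omega(k/n)$. Upgrading this to constant-fraction support overlap requires combining two facts: (i) per-signal, the planted-clique mass $\sum_j \dii_\sigma(S_j), \sum_j q_\sigma(S_j) = O(1)$ w.h.p.\ (each vertex lies in only $O(1)$ planted cliques on average under $r = \Theta(n/k)$), and (ii) a Cauchy-Schwarz-style argument showing that among the indices contributing to $\sum_j \dii_{\sigma_i}(S_j) q_{\sigma_i}(S_j) \ge \eps/2$, only $O(1)$ are ``bright'' with both $\dii_{\sigma_i}(S_j)$ and $q_{\sigma_i}(S_j) = \Omega(\eps^2)$. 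Combined with the $O(1/k)$ entry bound, some threshold truncation $T$ of $q_{\sigma_i}$ at $\tau = \Theta(\eps^2/k)$ has size $O(k/\eps^2)$ and satisfies $|T \cap S_i| = \Omega(\eps^2 k)$, giving $|T \cap S_i|/|T \cup S_i| = \Omega(\eps^4)$ and thus allowing Lemma~\ref{lem:approx_recovery} to recover $S_i$. Enumerating over $(\sigma,\tau)$ pairs covers the ``winning'' signal for every $i \in B$, yielding $\Omega(\eps r)$ distinct recovered cliques.

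The principal obstacle will be the uniform random-bi-density bound $\dii_\sigma^\intercal A_{\mathrm{rand}} q_\sigma \le \tfrac{1}{2} + o(1)$: since the signaling scheme is arbitrary, I need a careful level-set argument that handles the up to $d = k/150$ heavy atoms permitted by the defender penalty in the $d = \Theta(k)$ regime of Theorem~\ref{thm:zeroapxhardness} separately from the spread-out mass controlled by Proposition~\ref{prop:bidensity-whp}.  A secondary technical issue is establishing $\sum_j \dii_\sigma(S_j) = O(1)$ for the relevant signals: this requires a tail bound on per-vertex clique-multiplicity in $\G(n,\tfrac{1}{2},k,\Theta(n/k))$ together with an auxiliary argument ruling out posteriors that concentrate mass on atypically high-multiplicity vertices (such concentration would hurt utility without aiding recovery, which should allow it to be excluded under the high-utility hypothesis).
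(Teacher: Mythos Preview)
Your high-level strategy --- split $A$ into background plus planted parts, cap the background contribution near $\tfrac12$ via Proposition~\ref{prop:bidensity-whp}, and charge the residual $\Omega(\eps)$ to overlaps with the planted cliques --- matches the paper. The gap is in the extraction step. From $g_i = \Omega(k/n)$ for $\Omega(\eps r)$ indices you pick a signal $\sigma_i$ with $\dii_{\sigma_i}(S_i)\, q_{\sigma_i}(S_i) = \Omega(k/n)$ and propose to upgrade this (a quantity of order $n^{-1/2}$) to a level set $T$ with $|T\cap S_i|/|T\cup S_i|$ bounded below by a constant, via a Cauchy--Schwarz step that hinges on $\sum_j \dii_\sigma(S_j) = O(1)$. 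That bound does not hold: per-vertex clique multiplicity under $r=\Theta(n/k)$ is approximately Poisson with constant mean, so even after truncating entries to $O(1/k)$ an adversarial posterior can place its $\Theta(k)$ atoms on the $\Theta(k)$ highest-multiplicity vertices, which have multiplicity $\Theta(\log n/\log\log n)$; your suggestion that such concentration ``hurts utility'' is backwards, since those vertices are better connected, not worse. With $M=\Theta(\log n/\log\log n)$ in place of $O(1)$, your sketch yields only $q_\sigma(S_j) = \Omega(\eps/M) = o(1)$ for some $j$, and the resulting overlap ratio is $o(1)$, too weak for Lemma~\ref{lem:approx_recovery}. A separate issue: your per-signal entry bound ``all remaining entries are $O(1/k)$'' is derived from the \emph{global} hypothesis $u(\sig)\ge \tfrac12+\eps$, which does not control individual subgame values.

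The paper bypasses all of this with one device you are missing. It explicitly zeroes out entries of $\dii_\sigma$ and of the attacker's equilibrium strategy $y_\sigma$ that exceed $2/(\rho d)$, and justifies this \emph{per signal} (not from the global utility) by exhibiting a defender who randomizes uniformly over the at most $\rho d$ overrepresented nodes, yielding $u^\sigma(\sig) \le \hat\dii_\sigma^\intercal A \hat y_\sigma$. It then replaces $\hat y_\sigma$ by an extreme-point optimum $z_\sigma$ of the LP $\max\{\hat\dii_\sigma^\intercal A z : \|z\|_\infty \le 2/(\rho d),\ \|z\|_1\le 1,\ z\ge 0\}$. Every such extreme point is the uniform distribution on a set $T_\sigma$ of size exactly $\rho d/2 = \Theta(k)$, so $z_\sigma(S_i) = |T_\sigma\cap S_i|/|T_\sigma|$ identically. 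The background bound then reduces to a single application of Proposition~\ref{prop:bidensity-whp} to two $\Theta(k)$-sets (no dyadic decomposition needed), and on the planted side one obtains directly
\[
\frac{\eps}{2} \;\le\; \sum_i \Bigl(\sum_\sigma \alpha_\sigma \hat\dii_\sigma(S_i)\Bigr)\max_{\sigma} \frac{|T_\sigma\cap S_i|}{|T_\sigma|} \;\le\; \frac{k}{n}\sum_i \max_{\sigma} \frac{|T_\sigma\cap S_i|}{\Theta(k)},
\]
which with $r=\Theta(n/k)$ gives $\frac{1}{r}\sum_i \max_\sigma |T_\sigma\cap S_i|/k = \Omega(\eps)$. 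No per-clique signal selection, no multiplicity bound, no Cauchy--Schwarz, and each candidate $T_\sigma$ already has the right size for Lemma~\ref{lem:approx_recovery}.
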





\subsection{Proof of Lemma \ref{lem:zeroopt}}
Denote $G = (V,E)$. We construct a signaling scheme which groups together states of nature in the same planted clique, and show that it obtains the claimed attacker utility at equilibrium. Let $S_1,\ldots,S_r$ be the planted $k$-cliques in $G$ listed in some arbitrary order, and let $\hat{S}_i = S_i \sm \union_{1 \leq j < i} S_{j}$. Moreover, let $\hat{S}_0 = V \sm \union_{i=1}^r S_i$ be all remaining vertices not in any of the planted cliques. Evidently, $\hat{S}_0, \hat{S}_1, \ldots, \hat{S}_r$ form a partition of the vertices $V$. Our signaling scheme  $\sig: V \to \set{0,\ldots,r}$ is deterministic, and simply announces $\sig(v) = i$ when $v \in \hat{S}_i$. Represented as a convex decomposition, our signaling scheme is $(\alpha, x)$, where $\alpha_i = \frac{|\hat{S}_i|}{n}$ and $x_i$ is the uniform distribution over $\hat{S}_i$.

To prove that $\sig$ obtains the desired bound with high probability, we need the fact that most nodes are in one of our planted cliques. Recall that each such $S_i$ is a $k$-subset of $V$ chosen independently and uniformly at random. Since $r=3\frac{n}{k}$, in expectation a $1-\frac{1}{e^3}$ fraction of the $n$ vertices are in one of the planted cliques $S_1,\ldots,S_r$. 
This translates to a high-probability guarantee by a standard application of the Chernoff bound and the union bound. Specifically, with high probability we have that $\sum_{i=1}^r \alpha_i = \frac{1}{n} \sum_{i=1}^r \hat{S}_i \geq 0.9$.
%

We can now analyze the utility of an attacker at equilibrium, given the signaling scheme $\sig$ described above. In the subgame corresponding to a signal $i$, the state of nature $\theta$ is uniformly distributed in $\hat{S}_i$. Consider the mixed strategy of the attacker which simply chooses $a \in \hat{S}_i$ uniformly at random.
Since $(\theta,a)$ is a uniformly distributed pair in $\hat{S}_i \cross \hat{S}_i$, the probability that $(\theta,a)$ share an edge is precisely $\bden(\hat{S}_i,\hat{S}_i)$. Moreover, for any defender strategy $D \sse V$ with $|D|\leq d$, the sum of $\Pr [ a \in D]$ and $\Pr [ \theta \in D]$ is at most $2d/|\hat{S}_i|$. Therefore,  any pure defender strategy $D$ begets attacker utility $u^i (\sig) \geq \bden(\hat{S}_i,\hat{S}_i)-\frac{2d \rho}{|\hat{S}_i|}$. For $i\neq 0$, since $\hat{S}_i$ forms a clique this is at least $1-\frac{2d \rho +1}{|\hat{S}_i|} $. 

We can now complete the proof by bounding the expected utility of the attacker from signaling scheme $\sig$ by his utility when he simply plays the uniform strategy over $\hat{S}_i$ for each subgame $i$.
\begin{align*}
  u(\sig) &= \sum_{i=0}^r \alpha_i u^i(\sig) \\
&\geq \sum_{i=0}^r  \alpha_i  \left( \bden(\hat{S}_i,\hat{S}_i)-\frac{2d \rho}{|\hat{S}_i|}\right)\\
&=  \left( \sum_{i=0}^r  \alpha_i   \bden(\hat{S}_i,\hat{S}_i) \right)  - \frac{2d(r+1) \rho}{n}\\
&\geq \left( \sum_{i=1}^r  \alpha_i   \bden(\hat{S}_i,\hat{S}_i) \right) - \frac{2d(r+1) \rho}{n}\\
&= \left( \sum_{i=1}^r  \alpha_i   \left(1-\frac{1}{|\hat{S}_i|}\right)  \right) - \frac{2d(r+1) \rho}{n}\\
&=  \left( \sum_{i=1}^r  \alpha_i  \right)   - \frac{2d(r+1) \rho + r}{n}\\
&\geq  0.9  -  \frac{5dr \rho}{n}\\
&=  0.8\\
\end{align*}


\subsection{Proof of Lemma \ref{lem:zerorecover}}

We prove the Lemma by exhibiting an algorithm  which, given as input a graph $G \sim \G(n,\frac{1}{2},k,r)$ and a signaling scheme $\sig$ attaining the claimed attacker utility, outputs a family of clusters $\T$ from which a constant fraction of the planted cliques can be recovered in polynomial time. Specifically, we show that for a constant fraction of the planted cliques $S_1,\ldots,S_r$, there is a cluster $T \in \T$ which satisfies the requirements of Lemma \ref{lem:approx_recovery}.  
To simplify our proof, we assume that $\rho d/2$ is an integer, though unsurprisingly this assumption can easily be removed.
\begin{algorithm}[h]
\caption{Algorithm for Computing Approximation of Planted Cliques from a Signaling Scheme}
\label{alg:zerorecover1}
\begin{algorithmic}[1]
\INPUT Graph $G=(V,E)$ with adjacency matrix $A \in \RR^{n \times n}$
\INPUT A signaling scheme $\sig: V \to \Sigma$, represented in convex decomposition form $(\alpha,\dii)$ where $\alpha \in \Delta_\Sigma$, and $\dii_\sigma \in \Delta_V$ for each $\sigma \in \Sigma$.
\OUTPUT A list $\T$ of subsets of $V$, each of size $\rho d / 2$.
\STATE For each $\sigma \in \Sigma$, compute the attacker's minimax strategy $y_\sigma \in \Delta_V$ in the subgame corresponding to signal $\sigma$.
\STATE For each $\sigma \in \Sigma$, discard from each of $x_\sigma$ and $y_\sigma$ all entries greater than $\frac{2}{\rho d}$. Define
\[\hat{x}_\sigma(v) = 
\begin{cases}
   x_\sigma(v)  \mbox{  if } x_\sigma(v) \leq \frac{2}{\rho d}, \\
   0 \mbox{  otherwise. } 
\end{cases}
\]
\[\hat{y}_\sigma(v) = 
\begin{cases}
   y_\sigma(v)  \mbox{  if } y_\sigma(v) \leq \frac{2}{\rho d}, \\
   0 \mbox{  otherwise. } 
\end{cases}
\]
\STATE For each $\sigma \in \Sigma$, let $z_\sigma \in \Delta_V$ be an extreme-point solution of the following linear program. (Note that $\hat{x}_\sigma$ is fixed)
\begin{lp}\label{lp:recover1}
  \maxi{(\hat{x}_\sigma)^\intercal A z_\sigma}
\st
\con{||z_\sigma||_\infty \leq \frac{2}{\rho d}}
\con{\sum_{v \in V} z_\sigma(v)\leq 1}
\qcon{z_\sigma(v) \geq 0}{v \in V}
\end{lp}
\STATE Let $T_\sigma$ be the support of $z_\sigma \in \Delta_V$, for each $\sigma \in \Sigma$.
\STATE Output $\T= \set{ T_\sigma : \sigma \in \Sigma}$.
 \end{algorithmic}
\end{algorithm}

Consider Algorithm \ref{alg:zerorecover1}. In step (2), the algorithm ``zeroes-out'' those vertices which are  ``overrepresented'' in either the attacker's strategy or the posterior distribution over states of nature.  This is justified because, as will become clear in the proof, we designed the game's payoffs so that overrepresented nodes can easily be protected by the defender, and hence account for only a small fraction of the attacker's total utility. After zeroing-out large entries, the posteriors $\hat{x}_\sigma$ and attacker strategies $\hat{y}_\sigma$ have entries no larger than $1/\Theta(k)$, and therefore behave roughly as uniform distributions over $\Theta(k)$ vertices.  Steps (3) and (4) are  ``cleanup steps'', which convert the attacker's mixed strategy to a uniform distribution over $\frac{\rho d}{2}  = \Theta(k)$ vertices $T_\sigma$. We will show that the resulting family $\T$ of clusters must overlap substantially with the planted cliques, in the sense of Lemma \ref{lem:approx_recovery}, for the utility of the attacker  to exceed the background density of $\frac{1}{2}$ by a constant. We formalize all this through a sequence of propositions.

\begin{prop}\label{prop:zero1}
  The attacker's utility $u(\sig)$ from signaling scheme $\sig=(\alpha,x)$ satisfies \[ u(\sig) \leq \sum_{\sigma \in \Sigma} \alpha_\sigma \hat{x}_\sigma^\intercal A \hat{y}_\sigma\]
where $A$ is the adjacency matrix of $G$, and $\hat{x}_\sigma$ and $\hat{y}_\sigma$ are as defined in step (2) of Algorithm \ref{alg:zerorecover1}.
\end{prop}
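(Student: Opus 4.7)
The plan is to prove the inequality per signal — i.e., $u^\sigma := u_r(\A^\sigma) \leq \hat{x}_\sigma^\intercal A \hat{y}_\sigma$ for each $\sigma \in \Sigma$ — and then take the $\alpha_\sigma$-weighted sum. Fix a signal $\sigma$, let $\bar{x}_\sigma := x_\sigma - \hat{x}_\sigma$ and $\bar{y}_\sigma := y_\sigma - \hat{y}_\sigma$ be the ``truncated'' parts, supported on the overrepresented sets $B_x := \set{v : x_\sigma(v) > 2/(\rho d)}$ and $B_y := \set{v : y_\sigma(v) > 2/(\rho d)}$. A Markov-type bound (entries exceed $2/(\rho d)$ and sum to at most~$1$) gives $|B_x|, |B_y| < \rho d/2$, and hence $|B_x \cup B_y| < \rho d$; this last bound is the linchpin of the argument.

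The first ingredient is a purely algebraic decomposition. Expanding $x_\sigma^\intercal A y_\sigma = (\hat{x}_\sigma + \bar{x}_\sigma)^\intercal A (\hat{y}_\sigma + \bar{y}_\sigma)$ and bounding each of the three cross terms by a product of $\ell_1$-norms (using $A \in \set{0,1}^{n \times n}$ together with nonnegativity of all four vectors), the terms collapse telescopically via $\|x_\sigma\|_1 = \|y_\sigma\|_1 = 1$ and $\|\hat y_\sigma\|_1 \leq 1$ to yield the clean estimate
\[ x_\sigma^\intercal A y_\sigma \;-\; \hat{x}_\sigma^\intercal A \hat{y}_\sigma \;\leq\; \|\bar{x}_\sigma\|_1 + \|\bar{y}_\sigma\|_1. \]

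The second ingredient uses that $y_\sigma$ is the attacker's minimax strategy in the subgame, with the defender playing an arbitrary $z \in \P_d$ and extracting penalty $\rho z^\intercal(x_\sigma + y_\sigma)$; hence for \emph{every} $z \in \P_d$,
\[ u^\sigma \;\leq\; x_\sigma^\intercal A y_\sigma \;-\; \rho z^\intercal(x_\sigma + y_\sigma). \]
Combining with the algebraic bound reduces the proposition to exhibiting some $z \in \P_d$ for which $\rho z^\intercal(x_\sigma + y_\sigma) \geq \|\bar{x}_\sigma\|_1 + \|\bar{y}_\sigma\|_1$.

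Constructing this $z$ — which will be supported on $B_x \cup B_y$ — is the crux of the argument, and splits into two cases. If $|B_x \cup B_y| \leq d$, take $z$ to be the indicator of $B_x \cup B_y$, so that $\rho z^\intercal x_\sigma \geq \rho \|\bar{x}_\sigma\|_1 \geq \|\bar{x}_\sigma\|_1$ (since $\rho \geq 1$) and symmetrically for $y_\sigma$. If instead $|B_x \cup B_y| > d$, take the fractional $z(v) = d/|B_x \cup B_y|$ on $v \in B_x \cup B_y$, which lies in $\P_d$ (each entry is in $[0,1]$ and the total is exactly $d$) and satisfies $\rho z^\intercal x_\sigma \geq (\rho d/|B_x \cup B_y|)\|\bar{x}_\sigma\|_1 > \|\bar{x}_\sigma\|_1$ precisely because $|B_x \cup B_y| < \rho d$. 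The principal obstacle addressed here is that when $\rho > 1$ the overrepresented set $B_x \cup B_y$ may exceed the integral budget $d$, so the defender cannot deterministically protect every large coordinate; the calibration $2/(\rho d)$ in the truncation threshold is chosen precisely so that a diluted fractional defense on all of $B_x \cup B_y$ still extracts enough from the attacker to cover both $\|\bar{x}_\sigma\|_1$ and $\|\bar{y}_\sigma\|_1$.
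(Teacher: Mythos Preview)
Your proof is correct and follows essentially the same approach as the paper: a signal-by-signal argument combining (i) the algebraic decomposition $x_\sigma^\intercal A y_\sigma \leq \hat{x}_\sigma^\intercal A \hat{y}_\sigma + \|\bar{x}_\sigma\|_1 + \|\bar{y}_\sigma\|_1$ with (ii) a defender strategy concentrated on the overrepresented set $B_x \cup B_y$ that extracts at least $\|\bar{x}_\sigma\|_1 + \|\bar{y}_\sigma\|_1$ in penalty. The only cosmetic difference is that the paper phrases the defender's response as a uniformly random pure set $D \subseteq O$ of size $\min(d,|O|)$, whereas you work directly with the marginal vector $z \in \P_d$ and split into the two cases $|B_x \cup B_y| \leq d$ and $|B_x \cup B_y| > d$; these are two descriptions of the same mixed strategy, and your case split is exactly the paper's $\min(d,|O|)$.
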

\begin{proof}
We prove this bound signal-by-signal. Fix a signal $\sigma \in \Sigma$ with posterior $x_\sigma$ and attacker strategy $y_\sigma$, and as shorthand use $x=x_\sigma$, $y=y_\sigma$, $\hat{x}= \hat{x}_\sigma$, and $\hat{y}=\hat{y}_\sigma$. We distinguish nodes which are \emph{overrepresented} in $x$ or $y$ --- specifically, the nodes $O=\set{v: \max(x(v), y(v)) \geq \frac{2}{\rho d}}$. There are at most $\rho d$ such nodes, since each of $x$ and $y$, by virtue of being a probability distribution,  can have at most $\rho d / 2$ entries exceeding $2/ \rho d$. 

We consider a defender who  chooses $D \sse O$ with $|D| = \min(d, |O|)$ uniformly at random. Observe that such a defender protects each node in $O$ with probability at least $\frac{\min(d,|O|)}{|O|} \geq \frac{d}{\rho d} = \frac{1}{\rho}$. The utility of the attacker is equal to the probability that $\theta \sim x$ and $a \sim y$ share an edge, minus $\rho$ times the quantity $\Pr [a \in D] + \Pr[\theta \in D]$, which can be bounded as follows.
\begin{align}\label{eq:zero1}
  u^\sigma(\sig) &= x^\intercal A y - \rho (\Pr [ \theta \in D ] + \Pr[ a \in  D]) \notag \\ 
  &= x^\intercal A y - \rho (\Pr [ \theta \in O] \Pr [ \theta \in D | \theta \in O] + \Pr [ a \in O] \Pr [a \in D | a \in O]) \notag \\
  &\leq x^\intercal A y - \rho . \frac{1}{\rho} (\Pr [ \theta \in O] + \Pr [ a \in O] ) \notag \\
  &= x^\intercal A y -  (x(O) + y(O)) 
\end{align}
Next, we let $\bar{x} = x - \hat{x}$ and $\bar{y} = y - \hat{y}$. Note that $x(O) = ||\bar{x}||_1$ and $y(O) = ||\bar{y}||_1$. We can bound $x^\intercal A y$ as follows.
\begin{align}\label{eq:zero2}
 x^\intercal A y &= (\hat{x} + \bar{x})^\intercal A (\hat{y} + \bar{y}) \notag\\
 &= \hat{x}^\intercal  A \hat{y} +  \hat{x}^\intercal  A \bar{y} +  \bar{x}^\intercal  A \hat{y} +  \bar{x}^\intercal  A \bar{y}  \notag\\
 &\leq \hat{x}^\intercal  A \hat{y} +  \hat{x}^\intercal  \bar{y} +  \bar{x}^\intercal  \hat{y} +  \bar{x}^\intercal  \bar{y}  \notag\\
 &\leq \hat{x}^\intercal  A \hat{y} +  \hat{x}^\intercal  \bar{y} +  \bar{x}^\intercal  \hat{y} +  2\bar{x}^\intercal  \bar{y}  \notag\\
 &= \hat{x}^\intercal  A \hat{y} +  \bar{x}^\intercal y + \bar{y}^\intercal x   \notag\\
 &\leq \hat{x}^\intercal  A \hat{y} +  ||\bar{x}||_1 ||y||_\infty + ||\bar{y}||_1 ||x||_\infty  \mbox{\hspace{1cm} ( Holder's inequality) }  \notag\\
 &\leq \hat{x}^\intercal  A \hat{y} +  ||\bar{x}||_1   + ||\bar{y}||_1    \notag\\
 &= \hat{x}^\intercal  A \hat{y} +  x(O) + y(O)   
\end{align}
Combining \eqref{eq:zero1} with \eqref{eq:zero2} completes the proof.
\end{proof}

\begin{prop}\label{prop:zero2}
\[ u(\sig) \leq \sum_{\sigma \in \Sigma} \alpha_\sigma \hat{x}_\sigma^\intercal A z_\sigma\]
\end{prop}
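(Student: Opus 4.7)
The plan is to chain the bound from Proposition \ref{prop:zero1} with the LP optimality of $z_\sigma$. Proposition \ref{prop:zero1} already establishes
\[ u(\sig) \leq \sum_{\sigma \in \Sigma} \alpha_\sigma \hat{x}_\sigma^\intercal A \hat{y}_\sigma,\]
so it suffices to show, signal-by-signal, that $\hat{x}_\sigma^\intercal A \hat{y}_\sigma \leq \hat{x}_\sigma^\intercal A z_\sigma$. Since $z_\sigma$ is defined as an optimal solution to the maximization LP \eqref{lp:recover1} with the fixed linear objective $\hat{x}_\sigma^\intercal A z$, this inequality reduces to verifying that $\hat{y}_\sigma$ is a feasible point of that LP.

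The feasibility check is immediate from the construction in step (2) of Algorithm \ref{alg:zerorecover1}. By construction, every coordinate of $\hat{y}_\sigma$ equals the corresponding coordinate of $y_\sigma$ if it is at most $\frac{2}{\rho d}$ and is zero otherwise, so $\|\hat{y}_\sigma\|_\infty \leq \frac{2}{\rho d}$. The entries are nonnegative because $y_\sigma \in \Delta_V$. Finally, since $\hat{y}_\sigma$ is obtained by zeroing out some coordinates of a probability vector, $\sum_{v \in V} \hat{y}_\sigma(v) \leq \sum_{v \in V} y_\sigma(v) = 1$. Thus $\hat{y}_\sigma$ satisfies all three constraints of LP \eqref{lp:recover1}.

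Optimality of $z_\sigma$ then gives $\hat{x}_\sigma^\intercal A \hat{y}_\sigma \leq \hat{x}_\sigma^\intercal A z_\sigma$ for every $\sigma \in \Sigma$. Multiplying each inequality by $\alpha_\sigma \geq 0$ and summing over $\sigma$ yields the desired bound. There is no real obstacle here; the proposition is essentially a bookkeeping step that replaces the attacker's minimax strategy (after zeroing) with the LP-optimal surrogate $z_\sigma$, setting the stage for later steps that argue $z_\sigma$ is supported on clusters overlapping the planted cliques.
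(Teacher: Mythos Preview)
Your proof is correct and matches the paper's own argument: it simply invokes Proposition~\ref{prop:zero1} and observes that $\hat{y}_\sigma$ is feasible for LP~\eqref{lp:recover1}, so optimality of $z_\sigma$ gives the bound. The only difference is that you spell out the feasibility check explicitly, whereas the paper states it in one line.
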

\begin{proof}
  This follows from Proposition \ref{prop:zero1} and the fact that that $\hat{y}_\sigma$ is a feasible solution to  linear program~\eqref{lp:recover1}.
\end{proof}

\begin{prop}\label{prop:zero3}
  $z_\sigma$ is a uniform distribution over $\rho d / 2$ vertices, namely $T_\sigma$.
\end{prop}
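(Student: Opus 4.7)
The plan is to characterize the extreme points of the feasible polytope of linear program~\eqref{lp:recover1}, namely $P = \{z \in \RR^n : 0 \leq z(v) \leq \tfrac{2}{\rho d} \text{ for all } v,\ \sum_v z(v) \leq 1\}$. I want to show that every extreme point of $P$ assigns either $0$ or $\tfrac{2}{\rho d}$ to each coordinate, and then argue that the optimal extreme point returned by the algorithm has exactly $\rho d/2$ coordinates set to $\tfrac{2}{\rho d}$.

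First I would classify the extreme points of $P$. For $z \in P$, partition $V$ into $U = \{v : z(v) = \tfrac{2}{\rho d}\}$, $Z = \{v : z(v) = 0\}$, and $F = V \setminus (U \union Z)$. The constraints that can be tight at $z$ are the $|U|$ upper bounds on $U$, the $|Z|$ non-negativity constraints on $Z$, and possibly the single sum constraint. Since $P$ lives in $\RR^n$, having $n$ linearly independent tight constraints forces either $F = \emptyset$, or $|F| = 1$ together with the sum constraint being tight.

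The crux, which invokes the assumption that $\rho d/2$ is a positive integer, is to rule out the case $|F|=1$. If there were a single fractional entry $f \in (0, \tfrac{2}{\rho d})$ with $|U|$ entries at $\tfrac{2}{\rho d}$ and sum equal to $1$, then $f = 1 - \tfrac{2|U|}{\rho d}$, and the requirement $f \in (0, \tfrac{2}{\rho d})$ translates to $\tfrac{\rho d}{2} - 1 < |U| < \tfrac{\rho d}{2}$, an open interval of length $1$ containing no integer. Hence every extreme point of $P$ has all entries in $\{0, \tfrac{2}{\rho d}\}$, with $|U| \leq \rho d/2$ forced by the sum constraint.

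Finally, since $A$ is a $0/1$ matrix and $\hat{x}_\sigma \geq 0$, the objective coefficients $(\hat{x}_\sigma^\intercal A)_v$ are all non-negative, so the LP value is monotone in $|U|$ and the maximum is attained (among extreme points) by taking $|U| = \rho d/2$ exactly, which is feasible since $\rho d/2 = \Theta(k) = o(\sqrt{n})$ is well below $n$. The algorithm returns such an extreme-point optimum (selecting, if necessary, the $\rho d/2$ vertices maximizing $(\hat{x}_\sigma^\intercal A)_v$ with arbitrary tie-breaking), so $z_\sigma$ is uniform on a set $T_\sigma$ of size $\rho d/2$. The only non-routine step is the integrality argument above; without the assumption that $\rho d/2$ is an integer, extreme points could have a single fractional coordinate and the conclusion would require an extra rounding step, which is why the lemma's proof invoked that assumption upfront.
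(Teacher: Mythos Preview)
Your proof is correct and is precisely the ``simple argument'' the paper alludes to: you show that the integrality of $\rho d/2$ forces every extreme point of the feasible polytope to be a $0/\tfrac{2}{\rho d}$ vector, and then use non-negativity of the objective to conclude that an optimal extreme point has support exactly $\rho d/2$. The paper's own proof states only that conclusion without spelling out the vertex characterization, so your argument is a faithful elaboration of the same approach.
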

\begin{proof}
  Recall that we assumed that $\rho d / 2$ is an integer. Therefore, a simple argument shows that an extreme-point solution to LP \eqref{lp:recover1} must set each variable either to $0$ or to $\frac{2}{\rho d}$. In particular, the optimal solution must set precisely $\rho d / 2$ entries of $z_\sigma$ to $\frac{2}{\rho d}$, as needed.
\end{proof}

We now upperbound  the contribution of edges outside the planted cliques to the utility of the attacker. Write the adjacency matrix $A$ as $A=A^- + A^+$, where $A^-$ are the background edges added in Step (1) of Definition \ref{def:plantedcover}, and $A^+$ are the clique edges added in Step (2) of Definition \ref{def:plantedcover}.

\begin{prop}\label{prop:zero4}
The following holds with high probability for all signals $\sigma \in \Sigma$.
  \[ (\hat{x}_\sigma)^\intercal A^- z_\sigma \leq 0.5 + \frac{\epsilon}{2}\]
\end{prop}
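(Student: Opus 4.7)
The plan is to reduce the bilinear form $(\hat{x}_\sigma)^\intercal A^- z_\sigma$ to a bi-density over clusters of size $\Theta(k)$, and then invoke Proposition~\ref{prop:bidensity-whp} applied to the background graph $G^-$ whose adjacency matrix is $A^-$.

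First, by Proposition~\ref{prop:zero3}, $z_\sigma$ is the uniform distribution on some set $T_\sigma \subseteq V$ of size exactly $\rho d/2 = \Theta(k)$. Combining this with the constraints $\hat{x}_\sigma \geq 0$, $\|\hat{x}_\sigma\|_\infty \leq 2/(\rho d)$, and $\|\hat{x}_\sigma\|_1 \leq 1$ from Step~(2) of Algorithm~\ref{alg:zerorecover1}, I would observe that the inner product $(\hat{x}_\sigma)^\intercal A^- z_\sigma$ is linear in $\hat{x}_\sigma$ over this polytope and $A^- z_\sigma$ is entrywise nonnegative. Hence it is maximized by placing mass $\frac{2}{\rho d}$ on the $\rho d/2$ indices $i$ with the largest values of $\sum_{j \in T_\sigma} A^-_{ij}$. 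Letting $X^\ast$ denote this optimal support, a direct computation shows
\[
(\hat{x}_\sigma)^\intercal A^- z_\sigma \;\leq\; \frac{2}{\rho d}\cdot \frac{2}{\rho d}\sum_{i \in X^\ast}\sum_{j \in T_\sigma} A^-_{ij} \;=\; \bden_{G^-}(X^\ast, T_\sigma) \;\leq\; \max_{X \subseteq V:\, |X|=\rho d/2}\; \bden_{G^-}(X, T_\sigma),
\]
where I used $|X^\ast| = |T_\sigma| = \rho d/2$ and the definition of bi-density.

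Next, I would note that $A^-$ is exactly the adjacency matrix of $G^- \sim \G(n,\tfrac{1}{2})$ drawn in Step~(1) of Definition~\ref{def:plantedcover}, independently of the planted clique edges. Since $\rho d/2 = \Theta(k) = \omega(\log^2 n)$, for $n$ sufficiently large we have $|X|,|T_\sigma| > \beta \log n$, so Proposition~\ref{prop:bidensity-whp}, instantiated with $p=\tfrac{1}{2}$ and $\alpha = 1+\epsilon$, yields $\bden_{G^-}(X,Y) \leq \tfrac{1}{2}(1+\epsilon) = \tfrac{1}{2}+\tfrac{\epsilon}{2}$ with high probability, simultaneously for every pair of clusters $X,Y$ of size at least $\rho d/2$. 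Chaining this with the reduction above completes the proof.

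The only subtle point — which I expect to be the main conceptual obstacle — is the uniformity over signals. Since the signaling scheme $\sig$ is supplied as input, $\hat{x}_\sigma$ and $T_\sigma$ are not fixed in advance by $G$, and in principle could be adversarially chosen after seeing $G^-$. However, the high-probability guarantee from Proposition~\ref{prop:bidensity-whp} is uniform over \emph{all} cluster pairs of the requisite size, so no union bound over signals is required: conditioned on the single high-probability event for $G^-$, the desired inequality holds simultaneously for every $\sigma \in \Sigma$ and for every admissible $\hat{x}_\sigma$ and $T_\sigma$ that the input scheme might induce.
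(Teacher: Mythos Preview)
Your proof is correct and follows essentially the same approach as the paper: both bound $(\hat{x}_\sigma)^\intercal A^- z_\sigma$ by maximizing over the polytope $\{x \geq 0 : \|x\|_\infty \leq 2/(\rho d),\ \|x\|_1 \leq 1\}$, observe the optimum is uniform on some set of $\rho d/2$ vertices, rewrite as a bi-density between two $\Theta(k)$-sized clusters in $G^-$, and apply Proposition~\ref{prop:bidensity-whp}. Your version is in fact slightly more careful --- you make explicit the choice $\alpha = 1+\epsilon$ and the uniformity-over-signals point (which the paper leaves implicit), and you correctly write $A^-$ in the bounding LP where the paper has a typo and writes $A$.
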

\begin{proof}
  Pick an arbitrary signal $\sigma$. Since $||\hat{x}_\sigma||_\infty \leq \frac{2}{\rho d}$, we have that $(\hat{x}_\sigma)^\intercal A^- z_\sigma$ is bounded from above by the value of the following linear program, in which $z_\sigma$ is held fixed and $x \in \RR^n$ is allowed to vary.
\begin{lp}
  \maxi{x^\intercal A z_\sigma}
\st
\con{||x||_\infty \leq \frac{2}{\rho d}}
\con{\sum_{v\in V} x(v) \leq 1}
\qcon{ x(v) \geq 0}{v \in V}
\end{lp}
By an argument similar to in Proposition \ref{prop:zero3}, at optimality $x$ is a uniform distribution over some $\rho d /2$ vertices $R$. Therefore, since $z_\sigma$ is also a uniform distribution over the $\rho d / 2$ vertices $T_\sigma$ (Proposition \ref{prop:zero3}), the value of the above linear program is equal to $\bden(R,T_\sigma)$. Since $\rho d / 2=\Theta(k) = \omega(\log n)$,  Proposition \ref{prop:bidensity-whp} implies the claimed bound.
\end{proof}

We can now wrap up the proof of Lemma \ref{lem:zerorecover} using the above propositions. We will show that, on average over our planted cliques $S_1,\ldots,S_r$, there is some $T \in \T$ output by Algorithm \ref{alg:zerorecover1} overlapping with a constant fraction of the clique vertices. As notation, we let $A^i$ denote the adjacency matrix of the clique $S_i$, for $i = 1, \ldots,r$. Note that $A^+ \leq \sum_{i=1}^r A^i$. First, we show that foreground edges contribute $\Omega(\epsilon)$ to the outerproduct $(\hat{x}_\sigma)^\intercal A z_\sigma$, on average over signals $\sigma \in \Sigma$.
\begin{align*}
0.5 + \eps &\leq u(\sig) \\
&\leq \sum_{\sigma \in \Sigma} \alpha_\sigma \hat{x}_\sigma^\intercal A z_\sigma &\mbox{(Proposition \ref{prop:zero2})}\\
&= \sum_{\sigma \in \Sigma} \alpha_\sigma \hat{x}_\sigma^\intercal (A^- + A^+) z_\sigma\\
&\leq (0.5 + \frac{\eps}{2}) \sum_{\sigma \in \Sigma} \alpha_\sigma  + \sum_{\sigma \in \Sigma} \alpha_\sigma \hat{x}_\sigma^\intercal A^+ z_\sigma &\mbox{(Proposition \ref{prop:zero4})}\\
&= 0.5 + \frac{\eps}{2}+ \sum_{\sigma \in \Sigma} \alpha_\sigma \hat{x}_\sigma^\intercal A^+ z_\sigma
\end{align*}
\noindent Therefore $\sum_{\sigma \in \Sigma} \alpha_\sigma \hat{x}_\sigma^\intercal A^+ z_\sigma \geq \frac{\eps}{2}$. Next, we ``break up'' this sum of outerproducts into the constituent contributions of each planted clique.
\begin{align*}
\frac{\eps}{2} &\leq  \sum_{\sigma \in \Sigma} \alpha_\sigma \hat{x}_\sigma^\intercal A^+ z_\sigma \\
&\leq  \sum_{\sigma \in \Sigma} \alpha_\sigma \hat{x}_\sigma^\intercal (\sum_{i=1}^r A^i) z_\sigma \\
&= \sum_{i=1}^r  \sum_{\sigma \in \Sigma} \alpha_\sigma \hat{x}_\sigma^\intercal A^i z_\sigma \\
&\leq \sum_{i=1}^r  \sum_{\sigma \in \Sigma} \alpha_\sigma \hat{x}_\sigma (S_i) z_\sigma(S_i) \\
&= \sum_{i=1}^r  \sum_{\sigma \in \Sigma} \alpha_\sigma \hat{x}_\sigma (S_i) \frac{|T_\sigma \intersect S_i|}{|T_\sigma|} 
\end{align*}
\noindent Finally, we show that the average planted clique is well represented by some $T \in \T$.
\begin{align*}
\frac{\eps}{2} &\leq \sum_{i=1}^r  \sum_{\sigma \in \Sigma} \alpha_\sigma \hat{x}_\sigma (S_i) \frac{|T_\sigma \intersect S_i|}{|T_\sigma|} \\
 &\leq  \sum_{i=1}^r  \left(\sum_{\sigma \in \Sigma} \alpha_\sigma \hat{x}_\sigma (S_i)\right) \left(\max_{T \in \T}  \frac{|T \intersect S_i|}{|T_\sigma|} \right) \\
 &\leq  \sum_{i=1}^r  \frac{|S_i|}{n} \left(\max_{T \in \T}  \frac{|T \intersect S_i|}{|T_\sigma|} \right) &\mbox{(Since $\sum_\sigma \alpha_\sigma \hat{x}_\sigma (v) \leq \sum_\sigma \alpha_\sigma x_\sigma(v) = \di_v = \frac{1}{n}$) } \\
 &= \frac{k}{n} \sum_{i=1}^r  \left(\max_{T \in \T}  \frac{|T \intersect S_i|}{|T_\sigma|} \right) \\
 &= O(1) \cdot \frac{1}{r}  \sum_{i=1}^r  \left(\max_{T \in \T}  \frac{|T \intersect S_i|}{|T_\sigma|} \right) &\mbox{(Since $r = \Theta(\frac{n}{k})$)} \\
 &= O(1) \cdot  \frac{1}{r}  \sum_{i=1}^r  \left(\max_{T \in \T}  \frac{|T \intersect S_i|}{k} \right) &\mbox{(Since $|T_\sigma| = \rho d / 2 = \Theta(k)$)} \\
\end{align*}
Therefore $\frac{1}{r}  \sum_{i=1}^r  \left(\max_{T \in \T}  \frac{|T \intersect S_i|}{k} \right) \geq \Omega(\eps)$; i.e., the average planted clique intersects at least one of the sets in $\T$ in a constant fraction $\Omega(\epsilon)$ of its vertices. This implies that an $\Omega(\epsilon)$ fraction of the planted cliques  intersect at least one of the sets in $\T$ in an $\Omega(\epsilon)$ fraction of its vertices, by a simple counting argument. Since each set in $\T$ is of size $\Theta(k)$, this satisfies the requirements of Lemma \ref{lem:approx_recovery} for a constant fraction of the planted cliques $S_1,\ldots,S_r$, completing the proof.

\section{Conclusion}

Our results raise several open questions, and we close with some of them.
First, note that our impossibility result for explicit zero sum games rules out approximation algorithms with additive error on the order of $\frac{1}{\polylog n}$ relative to the range of player payoffs, where $n$ is the number of a player strategies. Essentially, this rules out a \emph{fully polynomial time approximation scheme} (FPTAS) for optimal signaling, but not a PTAS: an $\epsilon$ additive approximation algorithm for every constant $\epsilon$ independent of $n$. A more immediately attainable goal might be a quasipolynomial time approximation scheme (QPTAS), since a quasipolynomial time algorithm exists for the planted clique problem. In the event of such positive results, it makes sense to examine extensions to more general (multi-player and non-zero sum) games.

 The reader might have noticed some similarities between optimal signaling and the problem of optimizing over Nash equilibria, which is known to admit a QPTAS, but not a PTAS by  the reduction of \citet{hazankrauthgamer} from the planted clique problem. However, those similarities do not appear to give any immediate answers for the signaling question. For example, whereas adaptations of the  QPTAS for the best Nash equilibrium problem might be able to identify individual ``good'' signals, piecing those together into an information structure appears to be nontrivial.

Our results also point towards other potential signaling questions. Much of the related work mentioned in the introduction considers particular game domains in which signaling can be used to effect desirable outcomes. We believe that our ideas, and in particular the connection with dense subgraph detection, can shed light on the algorithmic component of some of these applications --- we single out the auction domains studied in \cite{emeksignaling,miltersensignals,DIR14} as likely targets.



\section*{Acknowledgments}
 We thank Aditya Bhaskara, Uriel Feige, Nicole Immorlica, Albert Jiang, and Shang-Hua Teng for helpful discussions. We also thank the anonymous FOCS reviewers for helpful feedback and suggestions.

{
\bibliography{agt}
\bibliographystyle{plainnat}     
}

\newpage
\appendix
\section{Proof of Lemma \ref{lem:approx_recovery} }
\label{app:approx_recovery}

Let $G=(V,E)$. We fix one of the planted $k$-cliques $S \sse V$, and prove Lemma \ref{lem:approx_recovery} for every $T$ satisfying $|T \intersect S| > \epsilon |T \union S|$. Our proof has two steps. First, we show that if we can sample uniformly from $T \intersect S$, then we can recover planted clique $S$ in polynomial time with high probability. Then we show that we can simulate sampling from $T \intersect S$ by combining sampling from $T$ with some brute force enumeration in polynomial time.

\subsection{Step 1}

Assume we can sample uniformly from $T \intersect S$. Our algorithm for recovering $S$ is as follows.

\begin{itemize}
\item Sample: Let $R$ be a sample of $200\log n$ vertices from $T \intersect S$.
\item Filter 1: Let $\tilde{S}$ be all the common neighbors of $R$.
\item Filter 2: Let $\hat{S}$ be those nodes in $\tilde{S}$ with at least $k-1$ neighbors in $\tilde{S}$.
\end{itemize}

It is clear that nodes in the $k$-clique $S$ survive both filtration steps, and therefore $\hat{S} \supseteq S$. We will show that in fact $\hat{S} = S$, with high probability. For this, we need the following propositions bounding the ``connectivity'' of nodes in $V \sm S$ into $S$ and $T \intersect S$. We partition the edges of $G$ into \emph{background edges} $E^-$, added in step (1) of Definition \ref{def:plantedcover}, and \emph{foreground edges} $E^+$ added in step~(2). 

\begin{prop}\label{prop:filtera}
For every $T$, the number of vertices $v$ outside $S$ with $|E^-(v,S \intersect T)| > 0.6 |T \intersect S|$ is $O(\log n)$, with high probability. 
\end{prop}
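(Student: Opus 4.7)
The plan is to interpret the proposition in the context in which it is invoked: the $T$'s relevant to Lemma~\ref{lem:approx_recovery} satisfy $|T \intersect S| > \epsilon |T \union S| \geq \epsilon |S| = \epsilon k$. Since the conclusion depends on $T$ only through $W := T \intersect S$, it suffices to prove, with high probability \emph{uniformly} over $W$, that for every $W \sse S$ with $|W| \geq \epsilon k$, fewer than $C \log n$ vertices $v \notin S$ satisfy $|E^-(v, W)| > 0.6 |W|$, for a suitable constant $C$. (The degenerate case $|W| = 0$ is vacuous, since $|E^-(v,\emptyset)| = 0 \not> 0$.)

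For fixed $W$ of size $w \geq \epsilon k$ and fixed $v \notin S$, the key observation is that the background edges $E^-$ are i.i.d.\ $\text{Bernoulli}(p)$ with $p \leq \tfrac{1}{2}$ and are independent of the planting process; hence $|E^-(v, W)| \sim \text{Bin}(w, p)$. A Chernoff/Hoeffding bound gives $\Pr[|E^-(v, W)| > 0.6 w] \leq e^{-w/50} \leq e^{-\epsilon k/50}$. The events for different $v$'s involve disjoint edges and are therefore mutually independent; letting $Z_W$ denote the number of bad $v$'s, $Z_W$ is a sum of at most $n$ independent $\text{Bernoulli}(q)$ indicators with $q \leq e^{-\epsilon k/50}$, so a second Chernoff bound yields $\Pr[Z_W \geq C \log n] \leq (enq/(C\log n))^{C \log n}$.

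To finish, I would union bound over the at most $2^k$ subsets $W \sse S$ with $|W| \geq \epsilon k$. The hypothesis $k = \omega(\log^2 n)$ (indeed $k = \omega(\log n)$ would already suffice here) drives $q$ much below $1/n$, so the per-$W$ tail shrinks to $n^{-\omega(\log n)}$, easily absorbing the $2^k$ enumeration and giving an overall failure probability well below $1/\poly(n)$. The single delicate point is this union bound over exponentially many $W$: the reason it works is that Chernoff's exponent multiplies $\log q$ by $C \log n$, so combining it with $q \leq e^{-\epsilon k/50}$ produces a tail that beats $2^{-k}$ by a factor of roughly $\log n$ in the exponent — just enough to swallow the enumeration of subsets without needing any structural property of the $W$'s.
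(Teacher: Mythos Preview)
Your proof is correct, but it takes a different route from the paper. The paper's proof is a one-line appeal to Proposition~\ref{prop:bidensity-whp}: write $W = S \cap T$, and let $B \subseteq V \setminus S$ be the set of ``bad'' vertices $v$ with $|E^-(v,W)| > 0.6|W|$. If $|B| > \beta\log n$, then since the background edge set $E^-$ is itself distributed as $\mathcal{G}(n,p)$ and $|W| = \Omega(k) = \omega(\log n)$, Proposition~\ref{prop:bidensity-whp} applied with $X = B$, $Y = W$, and $\alpha = 1.2$ forces $\bden_{E^-}(B,W) \leq 0.6$; but every vertex of $B$ has more than $0.6|W|$ background neighbors in $W$, so $\bden_{E^-}(B,W) > 0.6$, a contradiction. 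Uniformity over $T$ comes for free because Proposition~\ref{prop:bidensity-whp} already holds simultaneously for all pairs of large clusters.

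Your approach instead rebuilds this from scratch via a direct double-Chernoff argument together with an explicit union bound over the at most $2^k$ candidate sets $W \subseteq S$. The computation is sound, and the point you single out is exactly the crux: in the second Chernoff application the exponent scales like $C\log n \cdot \log(1/q) \gtrsim C\epsilon k\log n/50$, which dominates the enumeration cost $k\ln 2$ by a $\Theta(\log n)$ factor, so the union bound over $2^k$ subsets closes comfortably. The paper's route is shorter and modular (it reuses a general lemma proved once in the appendix), while yours is self-contained and makes the parameter trade-off transparent; both arguments in fact need only $k = \omega(\log n)$ for this particular proposition.
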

\begin{proof}
Since $p \leq 0.5$ and $|T \intersect S| = \Omega(k) = \omega(\log n)$, this follows directly from Proposition \ref{prop:bidensity-whp}.
\end{proof}

\begin{prop}\label{prop:filterb}
There are no vertices $v$ outside $S$ with $|E^-(v,S)| > 0.6 k$, with high probability.
\end{prop}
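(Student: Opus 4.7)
The plan is to reduce this to a standard Chernoff bound on a binomial random variable, followed by a union bound over vertices (and, if needed, over planted cliques).

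First I would note the key observation that for any fixed vertex $v \notin S$ and any fixed $u \in S$, the edge $\{u,v\}$ belongs to $E^-$ precisely when it was included in the independent edge-sampling step (1) of Definition \ref{def:plantedcover}; the clique-planting step (2) only ever adds edges with both endpoints inside some planted clique, so it cannot add an edge incident to $v \notin S$ with other endpoint in $S$ unless $v$ were inside that same clique (in which case the edge would be foreground, not background). Hence $|E^-(v,S)|$ is exactly a sum of $k$ independent $\mathrm{Bernoulli}(p)$ random variables, with mean $pk \leq k/2$.

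Next I would apply a multiplicative Chernoff bound. Since $0.6k \geq 1.2 \cdot pk$ when $p \leq 1/2$, there is an absolute constant $c>0$ with
\[
\Pr\bigl[|E^-(v,S)| > 0.6k\bigr] \;\leq\; \exp(-ck).
\]
Because $k = \omega(\log^2 n)$, this probability is $n^{-\omega(\log n)}$, which is superpolynomially small.

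Finally I would union bound over the at most $n$ candidate vertices $v \notin S$ (and, if one wants the statement to hold simultaneously for every one of the $r = O(n/k)$ planted cliques, over those as well, giving at most $nr = O(n^2/k)$ pairs). The union bound yields a total failure probability of at most $n^{O(1)} \cdot n^{-\omega(\log n)} = n^{-\omega(1)}$, so with high probability no $v \notin S$ has more than $0.6k$ background edges into $S$, as claimed.

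There is no real obstacle here: the only thing to be careful about is the interpretation of $E^-(v,S)$ under the clique-cover planting process, which is why I would spell out that step (2) cannot contribute to $E^-$ and that the $k$ indicators in question remain mutually independent. Once that is in place the proof is a one-line Chernoff plus union bound, made to work by the assumption $k = \omega(\log^2 n)$ which forces $\exp(-\Omega(k))$ to beat any polynomial in $n$.
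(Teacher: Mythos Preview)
Your proposal is correct and matches the paper's own proof, which simply says ``This is by a straightforward application of the Chernoff bound and the union bound.'' You have filled in exactly the details the paper omits: since $E^-$ is by definition the step-(1) edges and is independent of the choice of $S$, conditioning on $S$ makes $|E^-(v,S)|$ a $\mathrm{Binomial}(k,p)$ variable with $p\le 1/2$, and Chernoff plus a union bound over the $n$ vertices (using $k=\omega(\log^2 n)$) finishes it.
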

\begin{proof}
This is by a straightforward application of the Chernoff bound and the union bound.
\end{proof}

\begin{prop}\label{prop:filterc}
Every vertex $v$ outside $S$ has $|E^+(v,S)| = O(\log^2 n) = o(k)$, with high probability.
\end{prop}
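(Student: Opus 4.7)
The plan is to decompose $|E^+(v,S)|$ according to which planted clique is responsible for each foreground edge. Since $v \notin S$, any foreground edge from $v$ into $S$ must originate from a planted clique $S_j$ \emph{other} than $S$ itself that happens to contain both $v$ and some $u \in S$. Writing $S = S_{i^*}$ and $N_v = \{j \neq i^* : v \in S_j\}$, this immediately yields
\[ |E^+(v,S)| \;\leq\; \sum_{j \in N_v} |S_j \cap S| \;\leq\; |N_v| \cdot \max_{j \neq i^*} |S_j \cap S|, \]
so it suffices to show each factor is $O(\log n)$-ish with high probability.

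For the first factor, the planted cliques are chosen independently and uniformly among $k$-subsets of $[n]$, so $|N_v|$ is stochastically dominated by $\mathrm{Bin}(r, k/n)$, whose mean is $O(1)$ since $r = O(n/k)$. A standard Chernoff bound gives $\Pr[|N_v| \geq 3\log n /\log\log n] \leq n^{-2}$, and a union bound over the $n$ vertices $v$ then yields $\max_{v \notin S} |N_v| = O(\log n/\log\log n)$ with high probability. For the second factor, for each fixed $j \neq i^*$, the set $S_j$ is independent of $S$ and uniform among $k$-subsets, so $|S_j \cap S|$ is hypergeometric with mean $k^2/n = o(1)$ (using $k = o(\sqrt{n})$). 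A standard tail bound on the hypergeometric gives $\Pr[|S_j \cap S| \geq C\log n] \leq n^{-2}$ for an appropriate absolute constant $C$, and a union bound over the $r \leq n$ cliques yields $\max_{j \neq i^*} |S_j \cap S| = O(\log n)$ with high probability.

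Combining the two factors on the right-hand side of the displayed inequality, I would conclude $|E^+(v,S)| = O(\log^2 n / \log\log n)$ uniformly over $v \notin S$, which is $o(k)$ because $k = \omega(\log^2 n)$. The main obstacle is essentially parameter bookkeeping: both tail bounds must be $1/\poly(n)$ to survive the union bound over all $v$, and their product has to fit inside the $o(k)$ budget granted by the assumption $k = \omega(\log^2 n)$. No more delicate argument (e.g.\ explicitly conditioning on $v \in S_j$) seems necessary, since the hypergeometric bound on $|S_j \cap S|$ is unconditional and the $v \notin S$ assumption only simplifies the parameters.
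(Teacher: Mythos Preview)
Your proposal is correct and follows essentially the same approach as the paper: both arguments bound $|E^+(v,S)|$ by the product of (i) the number of planted cliques containing $v$ and (ii) the maximum overlap of any other planted clique with $S$, then control each factor by a Chernoff/hypergeometric tail bound plus a union bound. The paper states these as two standalone facts (``no vertex is in more than $O(\log n)$ cliques'' and ``no two cliques overlap in more than $O(\log n)$ vertices''), while you phrase them as bounds on $|N_v|$ and $\max_{j\neq i^*}|S_j\cap S|$; your slightly sharper $O(\log n/\log\log n)$ on the first factor is fine but not needed.
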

\begin{proof}
This follows from two facts which hold with high probability: (a) no two planted cliques overlap in more than $O(\log n)$ vertices, and (b) no vertex is in more than $O(\log n)$ planted cliques. We show both next.

For (a),  let $S_1,\ldots,S_r$ be the planted cliques (one of which is our $S$).  Fix $S_i$ and $S_j$ with $i \neq j$. The probability a vertex $v$ is in both $S_i$ and $S_j$ is $(\frac{k}{n})^2 < \frac{1}{n}$.  A simple application of the Chernoff bound implies that $|S_i \intersect S_j|  < 4 \log n$ with probability at least $1-1/n^4$.\footnote{This also requires showing negative association of the relevant indicator variables --- one per node of the graph, indicating membership in both $S_i$ and $S_j$. We omit the (straightforward) details.} By the union bound and the fact that $r< n$, this holds for all pairs $i \neq j$ with probability at least $1-1/n^2$.

For (b), observe that for a fixed vertex $v$ the events $\set{v \in S_i}_{i=1}^r$  are independent Bernoulli trials with probability $\frac{k}{n}$ each. Since the number of these events is $r= O(\frac{n}{k})$, the Chernoff bound implies that at most $O(\log n)$ of these events hold with high probability $1-1/\poly(n)$. Taking a union bound over all vertices $v$ completes the proof.

\end{proof}

We can now show that no vertex $v$ outside $S$ survives both filtration steps, with high probability. First, combining Propositions \ref{prop:filtera} and \ref{prop:filterc}, we know that for all but at most $O(\log n)$ nodes $v \not\in S$ we have that $|E(v,T \intersect S)| =E^-(v,T \intersect S) + E^+(v,T \intersect S) \leq 0.6 |T \intersect S| + o(k) \leq 0.7 |T \intersect S|$ (for sufficiently large $k$). Since $R$ is a random sample of $200\log n$ vertices from $|T \intersect S|$, a simple application of the Chernoff bound shows that $E(v,R) \leq 1.1 \frac{|R|}{|T \intersect S|} E(v,T \intersect S)$ for every vertex $v$, with high probability. Therefore, for those vertices with $|E(v,T \intersect S)| \leq 0.7 |T \intersect S|$, none of them connect to more than $0.8 |R|$ vertices in our sample $R$, with high probability. Therefore, at most $O(\log n)$ vertices outside $S$ survive the first filtration step --- i.e., $|\tilde{S} \sm S| = O(\log n)$ --- with high probability.

Second, since $|\tilde{S} \sm S| = O(\log n)$, every node $v \in \tilde{S} \sm S$ has $|E(v, \tilde{S})| \leq |E(v,S)| + O(\log n) = |E(v,S)| + o(k)$. Proposition \ref{prop:filterb} implies that $|E(v,S)| \leq 0.6 k$ with high probability, and therefore no vertex $v \in \tilde{S} \sm S$ survives the second filtration step (for sufficiently large $k$).

\subsection{Step 2}

In step 1, we assumed the ability to sample $200 \log n$ vertices uniformly at random from $T \intersect S$. However, since $S$ is a-priori unknown, this is impossible in the most naive sense. Nevertheless, such sampling can be simulated efficiently as follows: Sample roughly $\frac{200}{\epsilon} \log n$ vertices uniformly from $T$, and attempt the algorithm of step $1$ on every subset of roughly $ 200 \log n$ of the sampled vertices. This runs in polynomial time. Moreover, since $|T \intersect S| \geq \epsilon |T|$, with high probability it is the case that roughly $ 200 \log n$ of the sampled vertices lie in $T \intersect S$, and are distributed uniformly therein. 


\section{Omitted Preliminaries}
\label{app:prelim}

\subsection*{Probabilistic Bounds on Random Graphs}

\begin{prop}\label{prop:pairdensity}
    Let $G \sim \G(n,p)$, and let $A$ be the adjacency matrix of $G$. For any family $P \sse [n] \cross [n]$ of pairs of vertices, and $\alpha\geq 1$,
\[ \Pr\left[ \frac{1}{|P|} \sum_{(i,j) \in P} A_{ij}  \geq  \alpha p\right] \leq 2 \exp\left(-\frac{(\alpha-1)^2 p |P|}{12\alpha}\right)\]
\end{prop}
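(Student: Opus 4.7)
The plan is to reduce the partially correlated sum $\sum_{(i,j)\in P} A_{ij}$ to a weighted sum of \emph{independent} Bernoulli variables, and then to apply a standard multiplicative Chernoff bound.

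First, since $A_{ii}=0$, the diagonal pairs contribute nothing to the sum, so without loss of generality I assume $P$ contains no self-loops (dropping them only strengthens the target inequality, as the threshold $\alpha p|P|$ shrinks linearly with $|P|$ while the sum is unchanged). The only remaining correlation among the variables $\{A_{ij}:(i,j)\in P\}$ is the symmetry $A_{ij}=A_{ji}$. To eliminate this, I regroup the sum by underlying undirected edge: for each $e=\{i,j\}$ with $i\neq j$, let $n_e\in\{0,1,2\}$ be the number of ordered pairs in $P$ whose underlying edge is $e$. Then
\[
S \;:=\; \sum_{(i,j)\in P} A_{ij} \;=\; \sum_e n_e\, A_e,
\]
where the $A_e$ are i.i.d.\ Bernoulli$(p)$ and $\sum_e n_e = |P|$. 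This is the one genuinely content-bearing step; the rest is bookkeeping.

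Next, since each summand $n_e A_e$ lies in $[0,2]$, I rescale by $1/2$ so as to apply the standard multiplicative Chernoff bound in its usual $[0,1]$ form. The variable $S/2=\sum_e (n_e/2)A_e$ is then a sum of independent $[0,1]$-valued random variables with mean $\ex[S/2]=p|P|/2$. Applying the textbook bound $\Pr[X \geq (1+\delta)\mu]\leq \exp(-\delta^2\mu/(2+\delta))$ with $\delta=\alpha-1\geq 0$ yields
\[
\Pr\!\left[S \geq \alpha p|P|\right] \;=\; \Pr\!\left[\tfrac{S}{2} \geq \alpha\cdot\tfrac{p|P|}{2}\right] \;\leq\; \exp\!\left(-\frac{(\alpha-1)^2\, p|P|}{2(\alpha+1)}\right).
\]

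Finally, I absorb the minor discrepancy between this exponent and the one stated. Since $\alpha\geq 1$, one checks $2(\alpha+1) \leq 12\alpha$ (equivalently $\alpha\geq 1/5$), whence the above bound is at most $\exp(-(\alpha-1)^2 p|P|/(12\alpha))$, and this is trivially at most $2\exp(-(\alpha-1)^2 p|P|/(12\alpha))$, as claimed. The main (and only nontrivial) obstacle is the first step — grouping by underlying edge so that the symmetry $A_{ij}=A_{ji}$ is converted into weights $n_e\in\{1,2\}$ on otherwise independent Bernoullis; once that is done, the factor-of-$2$ slack in the conclusion easily accommodates both the loss from rescaling and the loss in bounding $2(\alpha+1)$ by $12\alpha$.
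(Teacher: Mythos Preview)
Your argument is correct and shares the paper's central idea: both group the sum by underlying undirected edge, turning the symmetry $A_{ij}=A_{ji}$ into integer weights $n_e\in\{1,2\}$ on independent Bernoullis. The difference is only in how Chernoff is then applied. The paper writes the weighted sum as $m_1+m_2$ with $m_1=\sum_{e:\,n_e\geq 1}A_e$ and $m_2=\sum_{e:\,n_e=2}A_e$, applies the Bernoulli-only Chernoff bound of its Proposition~\ref{prop:chernoff} separately to each, and then union-bounds --- this is where the leading factor $2$ originates. You instead rescale once by $1/2$ and apply the multiplicative Chernoff bound for independent $[0,1]$-valued summands in a single shot; this is cleaner, yields the sharper denominator $2(\alpha+1)$ in the exponent, and does not actually need the factor $2$ at all. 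The only cost is that you invoke the $[0,1]$-valued form of Chernoff rather than the Bernoulli-only version the paper records, but that extension is standard.

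One small wrinkle: your ``without loss of generality drop the self-loops'' step is not justified as written. Proving the bound for $P'=P\setminus\{(i,i):i\in[n]\}$ yields right-hand side $2\exp\bigl(-(\alpha-1)^2p|P'|/(12\alpha)\bigr)$, which is \emph{weaker} than the target when $|P'|<|P|$, so the reduction does not go through. This is easily repaired: keep the self-loop pairs as deterministic-zero summands, so that $S/2$ is still a sum of independent $[0,1]$-valued variables but now with mean $\mu=p|P'|/2\leq p|P|/2$; then the deviation $s=\alpha p|P|/2-\mu$ satisfies $s\geq(\alpha-1)p|P|/2$ and $2\mu+s\leq(\alpha+1)p|P|/2$, and the additive Chernoff form gives the desired exponent in $|P|$. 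The paper handles the same issue implicitly via the inequality $|F_1|+|F_2|\leq|P|$.
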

\begin{proof}
  For an unordered pair $e \in \binom{[n]}{2}$, let $x_e$ denote the indicator variable for the event $e \in E(G)$. By definition of $\G(n,p)$, the indicator variables $x_e$ are independent Bernoulli random variable with parameter $p$. Observe that 
  \begin{align}\label{eq:pairspf1}
    \sum_{(i,j) \in P} A_{ij} &= \sum_{(i,j) \in X \cross Y} x_{\set{i,j}} 
  \end{align}
Since our edges are undirected, whereas the above sum is over ordered pairs, the number of times any of our indicator variables $x_e$ appears in the above sum varies. Specifically, for $e=\set{i,j}$, the variable $x_e$ appears twice if both $(i,j)$ and $(j,i)$ are in $P$, once if only one of $(i,j)$ and $(j,i)$ is in $P$, and zero times otherwise. Let $F_1$ be the set of edges appearing at least once, and $F_2$ be the set of edges appearing twice. 
Observe that $|F_2| \leq |F_1| \leq |P|$ and 
$|F_1| + |F_2| \leq |P|$. 
We can now simplify expression \eqref{eq:pairspf1}.
\begin{align}\label{eq:pairspf2}
    \sum_{(i,j) \in P} A_{ij} &= \sum_{e \in F_1} x_e + \sum_{e \in F_2} x_e 
  \end{align}
Observe that each of $m_1=\sum_{e \in F_1} x_e$ and $m_2=\sum_{e \in F_2} x_e$ is the sum of independent Bernoulli random variables with parameter $p$.
Using the Chernoff bound (Proposition \ref{prop:chernoff}, Equation \eqref{eq:chernoffadd}), we can bound the probability that  $m_1$ is too large. 
\begin{align*}
\Pr [ m_1 \geq p |F_1| + 0.5 (\alpha-1) p |P|] &= \Pr [ m_1 \geq \Ex[m_1] + 0.5(\alpha-1) p |P|] \\
&\leq \exp\left(-\frac{0.25(\alpha-1)^2 p^2 |P|^2}{ (2 p |F_1| + 0.5 (\alpha-1) p |P|) }\right)\\
&= \exp\left(-\frac{0.25(\alpha-1)^2 p |P|^2}{2 |F_1| + 0.5 (\alpha-1)  |P| }\right) \\
&\leq \exp\left(-\frac{0.25(\alpha-1)^2 p |P|^2}{2 |F_1| + \alpha  |P| }\right) \\
&\leq \exp\left(-\frac{0.25(\alpha-1)^2 p |P|^2}{2 \alpha |F_1| + \alpha  |P| }\right) \\
&\leq \exp\left(-\frac{0.25(\alpha-1)^2 p |P|^2}{3 \alpha |P| }\right) \\
&= \exp\left(-\frac{(\alpha-1)^2 p |P|}{12 \alpha }\right) 
\end{align*}
An essentially identical calculation yields a similar bound for $m_2$.
\begin{align*}
\Pr [ m_2 \geq p |F_2| + \frac{1}{2}(\alpha-1) p |P|]  &\leq \exp\left(-\frac{(\alpha-1)^2 p |P|}{12\alpha}\right)  
\end{align*}
Using the union bound, we complete the proof.
\begin{align*}
\Pr\left[ \frac{1}{|P|} \sum_{(i,j) \in P} A_{ij}  >  \alpha p\right] &= \Pr [ m_1 + m_2 > \alpha p |P|] \\
&\leq \Pr [ m_1 + m_2 > p (|F_1| + |F_2|) + (\alpha - 1) p |P|]\\
&\leq \Pr [ m_1 > p |F_1| + \frac{1}{2}(\alpha - 1) p |P|] +  \Pr [ m_2 > p |F_2| + \frac{1}{2}(\alpha - 1) p |P|]\\
&\leq 2 \exp\left(-\frac{(\alpha-1)^2 p |P|}{12\alpha}\right)
\end{align*}
\end{proof}

\begin{prop}[Proposition \ref{prop:bidensity-whp} in main body]
    Let $p \in (0,1)$ and $\alpha > 1 $ be absolute constants (independent of $n$), and let $G \sim \G(n,p)$. There is an absolute constant $\beta = \beta(p,\alpha)$ such that the following holds with high probability for all clusters $X$ and $Y$ with $|X|,|Y| > \beta \log n$. 
\[ \bden_G(X,Y) \leq \alpha p \]
\end{prop}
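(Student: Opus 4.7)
The plan is to combine the single-pair tail bound of Proposition \ref{prop:pairdensity} with a union bound over all relevant pairs of clusters, stratified by size.

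Fix clusters $X,Y \sse [n]$ with $|X|=s$ and $|Y|=t$, and apply Proposition \ref{prop:pairdensity} to the pair family $P = X \cross Y$, so that $|P|=st$ and $\frac{1}{|P|}\sum_{(i,j)\in P} A_{ij} = \bden_G(X,Y)$. This yields
\[
\Pr\lt[\bden_G(X,Y) \geq \alpha p\rt] \;\leq\; 2\exp\lt(-\frac{(\alpha-1)^2 p\, st}{12\alpha}\rt).
\]
Since there are at most $\binom{n}{s}\binom{n}{t} \leq (en/s)^s (en/t)^t$ pairs of clusters of sizes $s,t$, a union bound over all such pairs gives a failure probability of at most
\[
2\exp\lt( s\ln(en/s) + t\ln(en/t) - \frac{(\alpha-1)^2 p\, st}{12\alpha}\rt) \;\leq\; 2\exp\lt( (s+t)(\ln n + 1) - \frac{(\alpha-1)^2 p\, st}{12\alpha}\rt).
\]
Now use that $s,t \geq \beta \log n$ implies $st \geq \frac{\beta \log n}{2}(s+t)$. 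Consequently the negative term dominates once $\beta$ is chosen large enough in terms of $p$ and $\alpha$: specifically, picking $\beta = \beta(p,\alpha)$ so that $\frac{(\alpha-1)^2 p\, \beta}{24\alpha} \geq 3\ln n /\log n \cdot (\ln n + 1)/\ln n$ (any $\beta = C \alpha/((\alpha-1)^2 p)$ with $C$ a sufficiently large absolute constant works for large $n$), we get that the exponent is at most $-(s+t) \ln n$, hence the failure probability is at most $2 n^{-(s+t)}$.

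Finally, sum over all admissible sizes $s,t \in \{\lceil \beta \log n \rceil, \ldots, n\}$:
\[
\sum_{s,t \geq \beta \log n} 2 n^{-(s+t)} \;\leq\; 2 \lt(\sum_{s \geq \beta \log n} n^{-s}\rt)^2 \;\leq\; \frac{1}{\poly(n)},
\]
which is the desired high-probability bound. The only real obstacle is balancing the $n^{s+t}$-sized union bound against the $\exp(-\Omega(st))$ tail; this balance hinges on the fact that once $s,t \geq \beta \log n$, the product $st$ dominates $(s+t)\log n$ by a factor that can be made arbitrarily large by taking $\beta$ large, which is exactly what allows us to absorb the combinatorial count and still leave room for the sum over sizes.
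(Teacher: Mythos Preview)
Your proof is correct and follows essentially the same approach as the paper: apply the single-pair tail bound from Proposition~\ref{prop:pairdensity}, then union bound over all pairs of clusters stratified by size, using that $st \gtrsim \beta \log n \cdot (s+t)$ once $s,t \geq \beta \log n$ to make the exponential tail beat the $n^{s+t}$ count. The displayed choice of $\beta$ (``$\frac{(\alpha-1)^2 p\, \beta}{24\alpha} \geq 3\ln n /\log n \cdot (\ln n + 1)/\ln n$'') is garbled, but your parenthetical clarification that any $\beta = C\alpha/((\alpha-1)^2 p)$ with $C$ a large absolute constant suffices is correct and is all that is needed.
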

\begin{proof}
  Consider two cluster sizes $k, \ell \in [n]$. Given a fixed pair of clusters $|X|$ and $|Y|$ with $|X| = k$ an $|Y|=\ell$, Proposition \ref{prop:pairdensity} implies that
  \begin{align*}
    \Pr\left[ \bden(X,Y) \geq \alpha p \right] &= \Pr\left[ \frac{1}{|X| |Y|} \sum_{(i,j) \in X \cross Y} A_{ij}  \geq  \alpha p\right] \\
      &\leq 2 \exp\left(-\frac{(\alpha-1)^2 p k \ell}{12\alpha}\right)    \\
  \end{align*}
In particular, since $p$ and $\alpha$ are constants, there is a constant $c>0$ such that \[\Pr\left[ \bden(X,Y) \geq \alpha p\right] \leq \exp(- c k \ell).\]  

There are at most $n^k$ clusters of size $k$, and $n^\ell$ clusters of size $\ell$. Therefore, the probability that any pair of clusters $X$ and $Y$ with $|X| = k$ and $|Y|=\ell$ satisfies $\bden(X,Y) \geq \alpha p$ is, by the union bound, at most \[ n^{k+\ell} \exp( - c k \ell) = \exp((k + \ell) \log n - c k \ell).\]
When $k,\ell \geq \beta \log n$ for some constant $\beta$, this probability is at most \[ \exp((k + \ell) \log n - c k \ell) \leq \exp (2 \beta \log^2n - \beta^2 c \log^2n).\]
We can choose the constant $\beta$ large enough so that this probability is $\exp(-\Omega(\log^2 n))$ for every pair of integers $k,\ell \geq \beta \log n$. Taking a union bound over all pairs of integers $k,\ell \in [n]$ with  $k,\ell \geq \beta \log n$, of which there are at most $n^2$, completes the proof.
\end{proof}

\subsection*{Tail Bounds}
 We use the following convenient forms of the Chernoff bound
\begin{proposition}[Chernoff Bounds]\label{prop:chernoff}
  Let $X_1,\ldots, X_n$ be independent Bernoulli random variables with $\mu= \sum_{i=1}^n  \Ex [X_i]$.  Let $X = \sum_{i=1}^n  X_i$ be their sample sum.  For every $\alpha \geq 1$  the following holds.
  \begin{equation}
    \label{eq:chernoffmult}
    \Pr [ X > \alpha \mu ]\leq \exp\left(-\frac{(\alpha -1)^2 \mu }{\alpha + 1}\right)
  \end{equation}
Equivalently, for every $t \geq 0$, the following holds.
\begin{equation}
  \label{eq:chernoffadd}
 \Pr [ X >  \mu + t ]\leq \exp\left(- \frac{t^2}{2\mu + t} \right) 
\end{equation}
\end{proposition}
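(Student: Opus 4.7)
The plan is to apply the standard exponential moment method (Chernoff--Cram\'er). First I would fix a parameter $s > 0$ and apply Markov's inequality to the nonnegative random variable $e^{sX}$, yielding $\Pr[X > a] \leq e^{-sa}\, \Ex[e^{sX}]$. Independence of the $X_i$'s then gives $\Ex[e^{sX}] = \prod_i \Ex[e^{sX_i}]$. For each Bernoulli summand, $\Ex[e^{sX_i}] = 1 + p_i(e^s - 1) \leq \exp\!\bigl(p_i(e^s - 1)\bigr)$ by the inequality $1 + z \leq e^z$. Multiplying and using $\mu = \sum_i p_i$ yields the clean moment-generating bound $\Ex[e^{sX}] \leq \exp\!\bigl(\mu(e^s - 1)\bigr)$, hence $\Pr[X > a] \leq \exp\!\bigl(\mu(e^s - 1) - sa\bigr)$.

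Next I would specialize to $a = \alpha \mu$ and minimize the exponent over $s > 0$. The unique minimizer is $s = \ln \alpha$ (well-defined since $\alpha \geq 1$), producing the classical Chernoff inequality $\Pr[X > \alpha\mu] \leq \exp\!\bigl(\mu\bigl[(\alpha - 1) - \alpha \ln \alpha\bigr]\bigr)$. To convert this into the form \eqref{eq:chernoffmult}, it suffices to establish the scalar inequality $\alpha \ln \alpha - (\alpha - 1) \geq \tfrac{(\alpha - 1)^2}{\alpha + 1}$ for all $\alpha \geq 1$. Setting $\delta = \alpha - 1 \geq 0$, this reduces to $\ln(1 + \delta) \geq \tfrac{2\delta}{2 + \delta}$. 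Both sides vanish at $\delta = 0$, so it is enough to compare derivatives: $\tfrac{1}{1 + \delta} \geq \tfrac{4}{(2+\delta)^2}$ is equivalent to $(2+\delta)^2 \geq 4(1+\delta)$, i.e.\ $\delta^2 \geq 0$, which is immediate.

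Finally, \eqref{eq:chernoffadd} follows from \eqref{eq:chernoffmult} purely by the change of variable $t = (\alpha - 1)\mu$, equivalently $\alpha = 1 + t/\mu$. Under this substitution one has $\alpha \mu = \mu + t$, and $\tfrac{(\alpha-1)^2 \mu}{\alpha + 1} = \tfrac{(t/\mu)^2 \mu}{2 + t/\mu} = \tfrac{t^2}{2\mu + t}$, matching the stated bound exactly. I do not anticipate any real obstacle: the only nontrivial step is the scalar inequality $\ln(1+\delta) \geq \tfrac{2\delta}{2+\delta}$, which is the sharper-than-$\delta^2/2$ estimate that buys the $\alpha + 1$ (equivalently $2\mu + t$) in the denominator and is the standard form used in Proposition \ref{prop:pairdensity}.
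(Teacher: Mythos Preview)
Your argument is correct and is exactly the standard exponential-moment derivation of these Chernoff forms. One small remark: the paper does not actually prove Proposition~\ref{prop:chernoff}; it is simply quoted as a ``convenient form of the Chernoff bound'' and used as a black box in the proof of Proposition~\ref{prop:pairdensity}. So there is no paper proof to compare against, and your write-up would serve as a self-contained justification. The only edge case worth a sentence is $\mu = 0$ when passing from \eqref{eq:chernoffmult} to \eqref{eq:chernoffadd} (the substitution $\alpha = 1 + t/\mu$ is undefined), but then $X \equiv 0$ almost surely and both bounds hold trivially.
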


\end{document}